\documentclass[sn-mathphys]{sn-jnl}
\usepackage{graphicx}%
\usepackage{multirow}%
\usepackage{amsmath,amssymb,amsfonts}%
\usepackage{amsthm}%
\usepackage{mathrsfs}%
\usepackage[title]{appendix}%
\usepackage{xcolor}%
\usepackage{textcomp}%
\usepackage{manyfoot}%
\usepackage{booktabs}%
\usepackage{algorithm}%
\usepackage{algorithmicx}%
\usepackage{algpseudocode}%
\usepackage{listings}%
\usepackage{subcaption}
\usepackage{svg}     % loads graphicx automatically
\svgpath{images/}
\usepackage{enumitem}
\usepackage{comment}
\usepackage{multicol}
\usepackage{tikz}
\usepackage{amsmath}
\usepackage{amssymb}
\usepackage{amsthm}
\usepackage{algorithm}
\usepackage{color}
\usepackage{bbm}
\usepackage{graphicx}
\newtheorem{thm} {Theorem}
\newtheorem{lemma} {Lemma}
\newtheorem{scollium} {Scollium}

\newtheorem{prop}[thm]{{\bf Proposition}}

\newtheorem{cor}[thm]{{\bf Corollary}}
\newcommand{\Set}{Set}
\newcommand{\Surf}{Surface}
\newcommand{\R}{\mathbb{R}}
\newcommand{\ECT}{ECT}

\newcommand{\Z}{\mathbb{Z}}
\newtheorem{definition}{\bf Definition}

\theoremstyle{remark}
\newtheorem*{remark}{Remark}

\title{The Reeb Transform}

\author[1]{\fnm{Erin} \sur{Chambers}}\email{echambe2@nd.edu}

\author[1]{\fnm{Shankha Shubhra} \sur{Mukherjee}}\email{smukher4@nd.edu}
\author[2]{\fnm{Katharine} \sur{Turner}}\email{katharine.turner@anu.edu.au}
\affil[1]{\orgdiv{Computer Science and Engineering}, \orgname{University of Notre Dame}, \orgaddress{\city{Notre Dame}, \postcode{46556}, \state{Indiana}, \country{USA}}}

\affil[2]{\orgdiv{Mathematical Sciences Institute}, \orgname{Australian National University}, \orgaddress{\city{Canberra}, \postcode{2601}, \state{ACT}, \country{Australia}}}

\begin{document}

\title[The Reeb transform]{The Reeb Transform}
\abstract{
%How much information does the Reeb transform encode for a shape?  For some classes of shapes it is invertible, sometimes it is not.  In this paper, we prove both positive results as well as providing counterexamples for the injectivity of the Reeb transform for other classes of shapes.

Reeb Transforms offer a compact representation of how shapes in $\mathbb{R}^n$ change when sliced along varying directions. We define the Reeb Transform as the family of Reeb graphs induced by height functions along all directions in the unit sphere. In this paper, we develop a rigorous treatment of Reeb Transforms for o-minimal definable sets, with particular emphasis on one-dimensional stratified spaces embedded in $\mathbb{R}^d$ and on surfaces in $\mathbb{R}^3$. We establish injectivity of the Reeb Transform in multiple settings, including compact surfaces in $\mathbb{R}^3$, capturing the essential topological features needed to uniquely reconstruct such surfaces. 
%We further show that local modifications, under certain regularity conditions, do not alter the Reeb Transform, thereby demonstrating a form of local stability. 
However, in dimensions above three, the Reeb Transform ceases to be injective, indicating the limitations of this descriptor in higher-dimensional settings. 
%In addition to these main results, we develop several other properties of Reeb Transforms that underscore their versatility and significance for shape analysis within o-minimal frameworks.
}

\keywords{Reeb graphs, Directional transforms, Topological data analysis}

%%\pacs[JEL Classification]{D8, H51}

%NEED TO UPDATE
\pacs[MSC Classification]{62R40, 55N31, 68T09, 65D18}

\maketitle

%\newpage
\section{Introduction}
The fundamental goal in statistical shape analysis is to define and compute meaningful distances between diﬀerent subsets of Euclidean space. A recent landmark-free approach to quantify both the geometry and topology of a shape is to use a topological transform such as the Persistent Homology Transform (PHT) or the Euler Characteristic Transform (ECT). Both of these transforms
take a shape M, viewed as a subset $\R^d$, and associate to each direction $v\in S^{d-1}$ a shape summary obtained by scanning $M$ in the direction $v$, calculating the
persistent homology $(PH(M,v))$ and the Euler curve respectively.
Diﬀerent formulations of the PHT and ECT have been demonstrably useful in diverse applications including prediction of disease progression from the
shapes of tumours \cite{crawford2020predicting} to the quantification of morphological variation of barley seeds~\cite{amezquita2022measuring}.
A nice introductory article is \cite{munch2025invitation}.

This paper introduces a variant of this topological transform, called the Reeb Transform, where we replace Euler curves and persistence modules with Reeb graphs.
Reeb graphs were originally introduced by Georges Reeb in the context of Morse theory~\cite{reeb1946}, providing a powerful way to capture the evolution of level sets of a real-valued functions on topological spaces~\cite{a355ff70-21ad-3b86-9c5b-ad05714c2715, Borodzik_2016, biasotti2000extended, dey2011reeb}.
%[Why the manifold restriction!! - https://hal.science/hal-01633101v2/document ]
Reeb graphs encode structural information about how sets decompose along slicing planes, but only at the level of connected components.  In contrast, other topological signatures like Euler transforms or persistence diagrams can capture more information about higher dimensional features, but lack information about connectivity that the Reeb graph can capture.
If $n_k$ denotes the number of $k$-cells in a cell decomposition of $M$, then the computational cost of extracting a Reeb graph is
$O\bigl((n_0+n_1)\log(n_0+n_1)\bigr)$, which is the same asymptotic complexity as computing degree-zero persistent homology. %We can compare this to $O((n_k+n_{k+1})^3)$ the computation of degree $k$ persistent homology and $O(n_0+n_1+n_2+\ldots n_{\text{maximum dimension}})$. 
While it may seem counterintuitive, the linear computation cost of Euler characteristic can be substantially more than the cost of computing the Reeb graph, which only uses the $0$ and $1$ cells. We can compare this with the computational cost $O\bigl((n_k+n_{k+1})^3\bigr)$ for computing degree-$k$ persistent homology, and with
$O\bigl(n_0+n_1+\cdots+n_m\bigr)$ for computing the Euler characteristic, where $m$ is the maximum dimension of a cell in the decomposition of $M$.
 %An added constraint is dimensionality, as persistence captures information about higher homology features which are lost in Reeb graphs, as they only capture homology information from lower dimensional features.
Over the past few decades Reeb graphs have become a cornerstone in topological data analysis and computational geometry, finding applications in shape matching, data visualization, and manifold learning~\cite{Biasotti2008, 10.1111:cgf.12933, DBLP:journals/corr/abs-2106-00157, NIPS2011_3a077244}. Older surveys~\cite{Biasotti2008} and more recent overviews~\cite{Yan2021, Bollen2021} document how Reeb graphs have been deployed in both theoretical settings and practical visualization tasks, underscoring their versatility.

%The answer becomes more interesting than in the case of the ECT or PHT - as we change prove that when restricted to some families of shapes, the Reeb transform is a sufficient statistics, but that in general we can prove that certain modifications of Euclidean subsets leave the  the Reeb Transform invariant.

%Much like other topological descriptions (e.g., persistent homology or Euler integral transforms~\cite{Turner2014, Curry2018, ghrist2018persistenthomologyeulerintegral}), 

It is well known that both the ECT and the PHT are injective for o-minimal sets, meaning the shape is uniquely determined by these descriptors. This further motivates their utility as a shape descriptor with provable quality~\cite{Turner2014,Curry2018}, although an exponential number of directions may be necessary to uniquely determine the shape~\cite{DBLP:journals/corr/abs-1912-12759}.
It is thus natural to ask the following question: is the Reeb transform invertible?  In this paper, we will see that unfortunately the Reeb transform is not invertible for the widest class of subsets of Euclidean shapes. The issue is that we do not encode the information about the homology of level sets, just their connected components.  

 %Consider the sphere vs the ball vs a spherical shell. The Reeb graphs are identical. When given the connected components of a level set we cannot distinguish between a level set being a disc, an annulus, or a circle. See Figure~\ref{fig:coneAndSphere}.

Indeed, we can quite simply see that Reeb transforms do not distinguish all shapes: for instance, the sphere, the solid ball, and a spherical shell can yield identical Reeb graphs, since all their slices appear as single connected components regardless of whether these slices are disks, annuli, or circles. Figure~\ref{fig:cone} illustrates similar counterexamples: a solid cone and a hollow cone can also exhibit the same Reeb graph. 
%Hence, one may ask whether the Reeb transform---the collection of Reeb graphs across \emph{all} coordinate directions---is truly invertible.

\begin{figure}
    \centering
    \includegraphics[width=\linewidth]{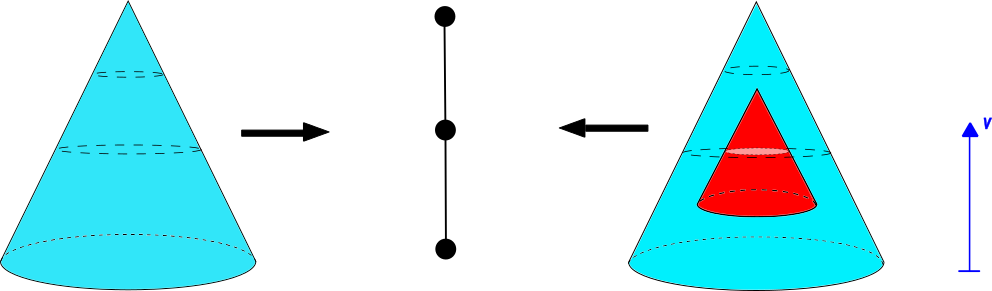}
    \caption{Figure illustrating that both a solid cone and a hollow cone have identical Reeb graphs, as each level set is a single component in both.}
    \label{fig:cone}
\end{figure}

While the Reeb transform is not universal in distinguishing arbitrary shapes, our research indicates it is, in fact, \emph{injective} for certain classes of objects. For example, shapes in low-dimensional settings can be reconstructed from their Reeb transforms under additional regularity assumptions. %Our work continues this investigation, presenting sub families of sets with positive injectivity results.
%and explicit failure cases in higher-dimensional domains.
%This question of “when does the Reeb transform invert a shape?” has motivated us to adopt an o-minimal perspective to avoid pathological behavior.
%In an o-minimal structure, definable sets enjoy finite stratifications, permitting Reeb graphs to remain finite combinatorial objects. This approach follows van den Dries~\cite{vandenDriesOminimal} and subsequent authors exploring tame topology~\cite{dwyer1979tame, FUJITA_2021, fujita2023locally}.

A critical aspect of studying Reeb transforms involves understanding how the connectivity of level sets changes. Various works~\cite{Curry2018, DBLP:journals/corr/abs-1912-12759, 1e1d4861e32c4b0c817b02fe130cd5a2} show that knowing the Euler characteristics of these level sets across all directions can suffice to reconstruct large classes of shapes. Our research leverages and extends these ideas to prove injectivity for surfaces in $\mathbb{R}^3$. Conversely, we provide examples demonstrating that, above three dimensions, the Reeb transform ceases to be injective, underscoring its limitations when more complex topological structures arise.
The end result is an interesting picture of Reeb transforms as an intermediate-level shape descriptor: highly effective in certain tame, low-dimensional settings, but inevitably losing information for more complex domains.

\section{Prelims and Definitions}
In this section, we establish the core concepts and notations needed throughout the paper. We begin with the notion of \emph{definable sets} in an o-minimal structure and then introduce the classes $Set(d)$, $Set(d,1)$, and $\Surf(3)$. Next, we define the height functions, level sets, and Reeb graphs. We discuss the Euler Characteristic Transform and its injectivity results which leads to the introduction of the Reeb transform, and an equivalence relation that classifies shapes based on their Reeb transforms. 

\subsection{Definable Sets, surfaces, and the Euler characteristic transform}

\begin{definition}[o‑minimal Structure]\label{def:o-minimal}
An \emph{o‑minimal structure} on the ordered field \((\mathbb{R},<,+,\cdot,0,1)\) is a sequence 
\(
\mathcal{S}=\bigl(\mathcal{S}_{n}\bigr)_{n\ge 1}
\)
such that for each positive integer \(n\)

\begin{enumerate}
\item \(\mathcal{S}_{n}\) is a Boolean algebra of subsets of \(\mathbb{R}^{n}\) (closed under finite unions, intersections, and complements);
\item if \(A\in\mathcal{S}_{m}\) and \(B\in\mathcal{S}_{n}\), then \(A\times B\in\mathcal{S}_{m+n}\);
\item if \(A\in\mathcal{S}_{n+1}\), the projection
      \(\pi(A)=\{\,x\in\mathbb{R}^{n}\mid\exists y\in\mathbb{R}:(x,y)\in A\}\)
      lies in \(\mathcal{S}_{n}\);
\item \(\mathcal{S}_{n}\) contains every algebraic subset of \(\R^{n}\)
      (i.e.\ every zero set of a real polynomial in \(n\) variables);
\end{enumerate}

A set $A\subset\mathbb{R}^n$ is called \emph{definable} (in $\mathcal{S}$) if $A\in\mathcal{S}_n$.
A function is \emph{definable} if its graph is a definable subset of $\mathbb{R}^{n+1}$.

The structure $\mathcal{S}$ is called \emph{o-minimal} if, in addition,
every set in $\mathcal{S}_1\subset \mathbb{R}$ is a finite disjoint union of open intervals and points.
\end{definition}
The o‑minimal axiom guarantees “tameness’’: definable subsets of \(\R^{n}\) admit finite cell decompositions, have finitely many connected components, and avoid pathological oscillations. 
The requirement of definability (in an o-minimal structure) ensures that sets have \emph{tame} topological properties, avoiding pathological cases and allowing for finite stratifications.

If we fix an ordered orthonormal basis of $\R^d$ we can define the corresponding projections onto the subspaces spanned by the first $j$ basis elements. For
$1\le j\le d$ we write $\pi_j:\R^d\to\R^j$ as this projection onto the first $j$ coordinates.
 For $D\subseteq\R^{j}$, and a function $\zeta:D\to\R$ we write $\Gamma(\zeta)=\{(x,\zeta(x)) \mid x\in D\}\subseteq\R^{j+1}$ for its graph.

\begin{definition}[Definable function]
Let $A\subset \R^j$ be definable. A function $\zeta:A\to \R$ is a \emph{definable function} if its graph $\Gamma(\zeta)$ us a definable set.   
\end{definition}

Some useful facts about definable functions are 
\begin{enumerate}
\item The image of a definable function is definable \cite[Proposition ~1.6]{coste1999introduction}
\item Any polynomial, and in particular linear, function is definable \cite[Exercise ~1.7]{coste1999introduction}
\end{enumerate}

Together these imply that if $A\subset \R^d$ is definable and $X\in O(d)$ is a change of basis then $X(A)\subset \R^d$ is also definable.

It will be useful to have cell decompositions that play well with the projection maps, called cylindrical cell decomposition.

\begin{definition}[Cylindrical definable cell decomposition]
\label{def:cdcd}
A \emph{cylindrical definable cell decomposition} (cdcd) of $\R^d$ of $\R^d$ is a finite partition $\mathcal{C}_d$ of $\R^d$ into definable sets, called \emph{cells}, satisfying a cylindrical condition defined inductively by dimension $d$. 

For $d=1$, a cdcd of $\R$ is determined by a finite (possibly empty)
set of points $x_1<x_2<\cdots<x_p$ in $\R$. Its cells are the
singletons $\{x_1\},\dots,\{x_p\}$ and the open intervals
\[
  (-\infty,x_1),\quad (x_1,x_2),\quad\dots,\quad (x_{p-1},x_p),
  \quad (x_p,+\infty).
\]
If $p=0$ the unique cell is $\R$ itself.

For $d>1$ a cdcd $\mathcal{C}_d$ of $\R^d$ consists of a cdcd
$\mathcal{C}_{d-1}$ of $\R^{d-1}$ together with, for each cell
$D\in\mathcal{C}_{d-1}$, an integer $m_D\ge 0$ and definable continuous functions
\[
  \zeta_{D,1}<\zeta_{D,2}<\cdots<\zeta_{D,m_D}\;:\;D\longrightarrow\R ,
\]
where the inequalities are required to hold pointwise on $D$. Setting
$\zeta_{D,0}\equiv-\infty$ and $\zeta_{D,m_D+1}\equiv+\infty$, the cells
of $\mathcal{D}$ lying over $D$ are the \emph{graphs}
\[
  \Gamma(\zeta_{D,l}), \qquad 1\le l\le m_D ,
\]
and the \emph{bands}
\[
  (\zeta_{D,l},\zeta_{D,l+1})
  \;=\;\bigl\{(x,y)\in D\times\R \;\bigm|\;
  \zeta_{D,l}(x)<y<\zeta_{D,l+1}(x)\bigr\},
  \qquad 0\le l\le m_D .
\]
The cells of $\mathcal{C}_d$ are all cells obtained in this way for $D\in \mathcal{C}_{d-1}$.
\end{definition}

By construction $\pi_j$ maps each cell of
$\mathcal{C}_d$ onto a cell of $\mathcal{C}_{j}$, for all $1\leq j<d$.

\begin{definition}[Compatible decompositions]\label{def:adapted}
A cdcd $\mathcal{C}_d$ of $\R^d$ is
\emph{compatible} with definable sets $X_1,\dots,X_m\subseteq\R^d$ if
each $X_j$ is a union of cells of $\mathcal{C}_d$.
\end{definition}

\begin{definition}[$C^k$ decompositions]\label{def:ckcdcd}
Let $k\ge 1$. A cdcd of $\R^d$ is a \emph{$C^k$ cdcd} if, at every level
of the inductive construction in Definition~\ref{def:cdcd}, all of the
functions $\zeta_{D,l}$ are $C^k$ on $D$.
\end{definition}

The fundamental result we will be using is the existence of cylindrical definable cell decompositions.

\begin{thm}[$C^k$ cell decomposition; {\cite[Thm.~6.6]{coste1999introduction}}]
\label{thm:celldecomp}
Let $k\ge 1$ and let $X_1,\dots,X_m\subseteq\R^d$ be definable. Then
there exists a $C^k$ cylindrical definable cell decomposition of $\R^d$
adapted to $X_1,\dots,X_m$. Moreover, given a definable function
$f:A\to\R$ with $A\subseteq\R^d$ definable, the decomposition may be
chosen adapted to $A$ and such that $f|_C$ is $C^k$ for every cell
$C\subseteq A$.
\end{thm}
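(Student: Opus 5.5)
The plan is the classical induction on the ambient dimension $d$, proving two statements together. Statement $(\mathrm{I}_d)$ is the theorem itself: for definable $X_1,\dots,X_m\subseteq\R^d$ there is a $C^k$ cdcd of $\R^d$ adapted to them. Statement $(\mathrm{II}_d)$ is the version for functions: for definable $f:A\to\R$ with $A\subseteq\R^d$ there is a $C^k$ cdcd adapted to $A$ such that $f|_C$ is $C^k$ on every cell $C\subseteq A$.

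The base case $d=1$ goes as follows. For $(\mathrm{I}_1)$, o-minimality says each $X_j$ is a finite union of points and open intervals, so we take all their endpoints as the points $x_1<\dots<x_p$. For $(\mathrm{II}_1)$, we need the monotonicity theorem: a definable $f$ on an interval is, off finitely many points, continuous and strictly monotone or constant on subintervals, and in fact $C^k$. I would prove this in the standard way. Show that $f$ is locally monotone or constant at all but finitely many points, using o-minimality of the definable sets $\{x: f>f(x)\text{ immediately to the right}\}$ and so on. Then apply the same argument to the definable difference quotients to obtain differentiability off a finite set, and iterate $k$ times. We add the resulting finitely many bad points to the decomposition.

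For the inductive step, assume $(\mathrm{I}_{d-1})$ and $(\mathrm{II}_{d-1})$. For $(\mathrm{I}_d)$, consider the fibres $(X_j)_x\subseteq\R$ for $x\in\R^{d-1}$, each a finite union of points and intervals. The key ingredient is uniform finiteness: there is a bound $N$ on the number of boundary points of the fibres $(X_j)_x$, uniformly in $x$. Given that, we define the definable functions $\xi_i(x)$ to be the $i$-th boundary point of $\bigcup_j\partial (X_j)_x$, on the definable set $B_i$ where at least $i$ such points exist. We then use $(\mathrm{I}_{d-1})$ and $(\mathrm{II}_{d-1})$ to find a cdcd of $\R^{d-1}$ with the following properties. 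It is adapted to the sets $B_i$. It makes each $\xi_i$ $C^k$ on each cell. It is adapted to the sets where the pattern of memberships holds, namely whether $\xi_i(x)\in X_j$ and whether the intervals between consecutive $\xi_i$ lie in $X_j$. This last requirement is a finite list of definable conditions. Over each cell $D$ the $\xi_i$ are then $C^k$ and strictly ordered, and the graphs and bands are each contained in or disjoint from every $X_j$.

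For $(\mathrm{II}_d)$, we first decompose so that, on each cell, $f$ is continuous in the last variable. We then apply the one-variable result uniformly in parameters to handle the partial derivatives, followed by a lemma that partial $C^k$ regularity plus continuity on cells of a refined decomposition yields $C^k$ regularity. This last step uses $(\mathrm{I}_d)$ on the definable sets where $f$ or its partials fail to be $C^k$. Those sets have empty interior in every cell by a dimension argument, so we can refine and recurse on lower-dimensional cells, where we use a cell's homeomorphism with an open cell in lower dimension.

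The main obstacle is uniform finiteness, together with the need to make the $\xi_i$ continuous, which is the genuinely hard part of the theory. I would first try to prove uniform finiteness simultaneously inside the induction, as in van den Dries' treatment, arguing by contradiction via the definable set of $x$ where the fibre count is locally bounded. Since the statement is cited from Coste, in the paper itself I would simply invoke that reference rather than reproduce this machinery.
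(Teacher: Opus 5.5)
Your outline is the standard inductive proof that underlies the cited result: uniform finiteness, then the boundary-point functions $\xi_i$ made $C^k$ on a decomposition of $\R^{d-1}$, then the regularity statement for functions. The paper gives no proof of its own and simply invokes Coste's Theorem~6.6, and you also end by citing that reference, so the approaches agree. If you ever wrote it out, note that your $(\mathrm{II}_d)$ step tacitly needs the $C^0$ decomposition theorem and dimension theory in $\R^d$ (joint continuity of the partials, empty interior of the bad sets), so the $C^k$ upgrade should come after the $C^0$ theorem is established, as in the standard references, rather than inside a single induction.
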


A useful theorem for proving the existence of deformation retractions is Theorem 4.6 in \cite{coste1999introduction}.

\begin{prop}
\label{prop:4.6}
Let $A$ be a closed and bounded definable subset of $\R\times\R^{n}$ and let
$B_i$, $i=1,\dots,k$, be definable subsets of $A$. Let
$\pi:\R\times\R^{n}\to\R$ be the projection on the first coordinate. Then there
exist a finite simplicial complex $K$ with vertices in $\R\times\mathbb{Q}^{n}$ and a
definable homeomorphism $\Phi:|K|\to A$ such that
\[
  \pi\circ\Phi=\pi|_{|K|}
\]
and each $B_i$ is a union of images by $\Phi$ of open simplices of $K$.
\end{prop}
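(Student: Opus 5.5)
This is Theorem 4.6 of Coste's notes (the definable triangulation theorem compatible with a projection), and the paper cites it rather than proving it. My plan follows the standard argument: induction on $n$, using cylindrical definable cell decompositions (Theorem~\ref{thm:celldecomp}). The key is to choose the cdcd so that the first coordinate comes first, so that the projection $\pi$ onto that coordinate is one of the cylindrical projections.

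First, fix coordinates $(t,x_1,\dots,x_n)$ and take a cdcd of $\R^{n+1}$ compatible with $A$ and all $B_i$. Here projection onto the first $j$ coordinates keeps $t$. Since $A$ is closed and bounded, only bounded cells meet $A$. Next, project to $\R\times\R^{n-1}$. By induction there is a simplicial complex $L$ with vertices in $\R\times\mathbb{Q}^{n-1}$ and a definable homeomorphism $\Psi:|L|\to \pi_n(A)$ with $\pi\circ\Psi=\pi$. It is chosen compatible with the images of the cells of the cdcd lying over $\pi_n(A)$, and with the loci where the section functions $\zeta_{D,l}$ coincide in the closure. To arrange this, the bounded continuous section functions are extended continuously to closures of cells; a refinement of the cdcd guarantees this is possible.

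Second, lift to one dimension higher. Over each simplex $\sigma$ of $L$, the functions $\zeta_{D,l}\circ\Psi$ extend continuously to $\bar\sigma$ and are totally ordered there, possibly with equalities on faces. The standard construction works as follows. Choose rational heights $0<1<\dots$ in the new coordinate. Place the lifted vertices $(v,\ell)$ for each vertex $v$ of $\sigma$ and each index $\ell$ of a section over $v$. Triangulate the region between consecutive "rational graphs" over $\sigma$ (a prism-like polytope) compatibly on faces. Then define $\Phi$ by mapping the piecewise linear graph at level $\ell$ onto the graph of the $\ell$-th section. The map is linear interpolation in the last coordinate between consecutive sections, composed with $\Psi$ on the base. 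This keeps the first coordinate unchanged, because $\Psi$ does and the new coordinate is not the first. It is definable and a homeomorphism onto $A$.

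The base case is $n=0$. There $A\subset\R$ is a finite union of points and closed intervals, so we triangulate by the endpoints with $\Phi=\mathrm{id}$.

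The main obstacle is the lifting step. The section functions must extend continuously to closed simplices, and the prisms must be triangulated consistently across shared faces where sections merge. The consistency is handled by ordering vertices globally, using the standard "staircase" triangulation of prisms determined by a fixed vertex order of $L$. The continuity comes from boundedness plus a preliminary refinement making the cdcd into a stratification in which the frontier of each cell is a union of cells. The base homeomorphism must also map the images of the $B_i$-cells onto unions of open simplices. This requires that $L$ was chosen compatible with $\pi_n$ of all cells, which the inductive hypothesis allows since it accepts finitely many subsets $B_i$.
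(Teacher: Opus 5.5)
The paper gives no proof of this statement; it quotes it from \cite{coste1999introduction}, so your sketch has to be measured against the standard inductive triangulation argument. Your outline has the right shape: a cylindrical decomposition with $t$ kept first, induction on the projection to $\R\times\R^{n-1}$, then a simplex-by-simplex lift. But the step you flag as the main obstacle is not handled, and the justification you give for it is false.

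Boundedness plus a frontier condition does not make the sections extend continuously to closed simplices. In fixed coordinates, no refinement and no choice of $\Psi$ can repair this. For example, let $A\subset\R\times\R^2$ be the closure of the graph of $\xi(t,x_1)=x_1/t$ over $D=\{0<x_1<t<1\}$. Then $\xi$ is bounded, $A$ contains the vertical segment $\{(0,0)\}\times[0,1]$, and $a=\Psi^{-1}(0,0)$ must be a vertex of $L$. Suppose $\xi\circ\Psi$ extended continuously to every closed simplex with vertex $a$. Then $x_1/t$ would have a single limit at the origin on each of finitely many sets covering a neighbourhood of the origin in $D$. That is impossible, since $x_1/t$ takes every value in $(0,1)$ in each such neighbourhood.

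The standard remedy is a good-direction lemma. One first makes a linear change of coordinates so that the relevant lower-dimensional frontier set has finite fibres over the base. The set of limit values at a point of $\bar\sigma$ is then connected (because $\sigma$ is convex) and lies in a finite fibre, so it is a single point. Here, however, the coordinate change must preserve $t$. The projection direction is therefore confined to the nowhere dense set $\{0\}\times S^{n-1}$, and the usual lemma says nothing about that set.

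In fact no such linear change need work. Take a compact piece of a helicoid $\{(t,s\cos t,s\sin t)\}$ whose rulings turn through more than a half-turn. Every admissible direction is parallel to some ruling, and at that ruling the lift fails exactly as in the example above. Overcoming this constraint is precisely what this proposition adds to the plain triangulation theorem, and your sketch does not address it.

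There is also a second, more routine gap in the lift itself. Compatibility of $L$ with the coincidence loci does not stop two sections $\xi_i<\xi_{i+1}$ over an open simplex from having equal limits at \emph{every} vertex. In that case your two piecewise-linear levels coincide and the prism between them collapses. For instance, take the closed disc in $\R\times\R$ bounded by $x=\pm\sqrt{t(1-t)}$. The complex $L=\{0,1,[0,1]\}$ is compatible with everything you list, but there is a single section over each endpoint. So both levels lift to the same segment, and $|K|$ comes out as an arc rather than a disc.

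The usual fix is to pass to the barycentric subdivision of $L$; this keeps the vertices in $\R\times\mathbb{Q}^{n-1}$ and preserves $\pi\circ\Psi=\pi$. Take a simplex $[\hat\sigma_0,\dots,\hat\sigma_d]$ spanned by the barycentres of a flag $\sigma_0<\dots<\sigma_d$. Its top vertex lies in the open simplex $\sigma_d$, where the sections are strictly ordered. Compatibility with the coincidence loci then makes the vertices at which $\bar\xi_i=\bar\xi_{i+1}$ an initial segment $\hat\sigma_0,\dots,\hat\sigma_k$ with $k<d$. Hence the piecewise-linear levels coincide exactly where the extended sections do, which is what the homeomorphism needs.
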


More details on o-minimal sets and cell decompositions  are available in \cite{coste1999introduction} and  \cite{vandenDriesOminimal}. In this paper we will restrict our attention to particular families of compact definable sets.

\begin{definition}%[Definable Sets in an O-Minimal Structure]
\label{def:S(n)}
We say that $A\subset \R^d$ is \emph{constructible} if it is both definable and compact.
Let $Set(d)$ denote the class of all constructible subsets of $\mathbb{R}^d$.
\end{definition}

We will see that although the Reeb Transform will not be injective when the domain is the entirety of $\Set(d)$, we do get injectivity results when we restrict to certain families of constructible sets. These are the constructible sets that decompose into $0$ and $1$ dimensional cells, and those in $\R^3$ that are homeomorphic to surfaces. 

\begin{definition}\label{def:Set(d,1)}
We write $\Set(d,1)\subset \Set(d)$ for those $A$ that admit a definable stratification
\[
A \;=\; A_0 \sqcup A_1,
\]
where $\dim(A_0)=0$ and $\dim(A_1)=1$, and $A_1$ is a (possibly disconnected) definable $C^1$
embedded $1$--dimensional submanifold of $\R^d$.
\end{definition}

\begin{definition}\label{def:Surf(3)}
We write $\Surf(3)\subset \Set(3)$ for the class of \emph{(embedded) closed definable surfaces} in
$\R^3$: namely, $A\in\Surf(3)$ if and only if $A\subset \R^3$ is compact and definable and is a
homeomorphic to a $2$--dimensional submanifold of $\R^3$ \emph{without boundary}.
\end{definition}

The Euler characteristic transform of a shape in $\mathbb{R}^d$, defined in \cite{Curry2018}, consists of the Euler characteristic function of all sublevel sets for all directions in $S^{d-1}$. 

\begin{definition}
    Let $A \in Set(d)$. The Euler Characteristic Transform of $A$ is the function $ECT(A): S^{d-1} \to \mathbb{Z}^{\mathbb{R}}$ (where $\mathbb{Z}^{\mathbb{R}}$ denotes the set of integer-valued functions on $\mathbb{R}$) defined by:
    \[ECT(A)(v)(t)=\chi(\{x\in A\mid x\cdot v\leq t\}).\]
\end{definition}

We briefly mention the following result, which follows from \cite[Theorem~3.5]{Curry2018} and its proof.

\begin{thm}(\cite[Theorem~3.5]{Curry2018})
\label{thm:ect-inj}
Let $X,Y\in \Set(d)$.
\begin{enumerate}
\item If
\[
\chi(\{x\in X \mid x\cdot v = t\})=\chi(\{y\in Y \mid y\cdot v = t\})
\quad\text{for all } v\in S^{d-1}\text{ and all }t\in\R,
\]
then $X=Y$ as subsets of $\R^d$.
\item If
\[
\chi(\{x\in X \mid x\cdot v \le t\})=\chi(\{y\in Y \mid y\cdot v \le t\})
\quad\text{for all } v\in S^{d-1}\text{ and all }t\in\R,
\]
then $X=Y$ as subsets of $\R^d$.
\end{enumerate}
\end{thm}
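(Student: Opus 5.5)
The plan is to reduce both parts to the injectivity of the Euler characteristic transform on constructible functions, via Euler calculus. That injectivity comes from the Schapira inversion formula, as in \cite{Curry2018}. Part 2 gives equal sublevel Euler curves; part 1 gives equal level-set Euler characteristics. In both cases the aim is an identity of the form $\mathcal{R}\mathbbm{1}_X=\mathcal{R}\mathbbm{1}_Y$ for a suitable Radon-type transform $\mathcal{R}$ that is invertible on constructible functions.

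\textbf{Step 1: part 1 via the hyperplane Radon transform.} For $v\in S^{d-1}$ and $t\in\R$, the level set $\{x\in X\mid x\cdot v=t\}$ is compact and definable, so its Euler characteristic is well defined. It equals the Euler integral $\int \mathbbm{1}_X\,\mathbbm{1}_{H_{v,t}}\,d\chi$, where $H_{v,t}=\{x\mid x\cdot v=t\}$ is a hyperplane. The hypothesis therefore says that the Radon transform of $\mathbbm{1}_X$ with respect to the hyperplane kernel
\[
S=\{(x,(v,t))\mid x\cdot v=t\}\subset \R^d\times(S^{d-1}\times\R)
\]
agrees with that of $\mathbbm{1}_Y$. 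We then invoke Schapira's inversion theorem. For a fixed $x$, the fibre $S_x$ is a $(d-1)$-sphere, so $\chi(S_x)=1+(-1)^{d-1}$. For $x\neq x'$, the set $S_x\cap S_{x'}$ is a $(d-2)$-sphere, so $\chi(S_x\cap S_{x'})=1+(-1)^{d-2}$. The difference $\chi(S_x)-\chi(S_x\cap S_{x'})=\pm 2$ is nonzero. The inversion formula then gives $\mathcal{R}'\mathcal{R}\mathbbm{1}_X=\lambda\mathbbm{1}_X+\mu\chi(X)\mathbbm{1}_{\R^d}$ with $\lambda\neq0$, where $\mathcal{R}'$ is the dual transform.

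\textbf{Step 2: recover the Euler characteristic term.} It remains to show $\chi(X)=\chi(Y)$. Fix $v$ and take $t$ larger than $\max_X x\cdot v$ and $\max_Y y\cdot v$. Then the sublevel set is all of $X$. Alternatively, Euler-integrate the level-set characteristics over $t$: by Fubini, $\int_{\R}\chi(X\cap H_{v,t})\,d\chi(t)=\chi(X)$. Either way $\chi(X)=\chi(Y)$, hence $\lambda\mathbbm{1}_X=\lambda\mathbbm{1}_Y$, and so $X=Y$. Since both sets are compact, there is no issue with compactly supported versus ordinary Euler characteristic.

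\textbf{Step 3: part 2 reduces to part 1.} For fixed $v$, the function $t\mapsto\chi(X_{\le t})$ is constructible in $t$. The key point is that for definable compact $X$ the sublevel set $X_{\le t}$ deformation retracts onto $X_{\le t+\epsilon}$ for small $\epsilon>0$. This follows from Proposition~\ref{prop:4.6} applied after a rotation making $v$ the first coordinate, using the triangulation compatible with the projection. Hence the function is right-continuous. By additivity of the (compactly supported) Euler characteristic,
\[
\chi(X\cap H_{v,t})=\chi(X_{\le t})-\lim_{s\uparrow t}\chi(X_{\le s}).
\]
The left limit exists by o-minimality and equals $\chi_c$ of the open sublevel set. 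Thus the sublevel data determine the level data, and part 1 applies.

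\textbf{Main obstacle.} The delicate step is the identity in Step 3. It requires two facts. First, $\chi_c(X_{<t})=\lim_{s\uparrow t}\chi(X_{\le s})$. Second, $\chi_c$ agrees with $\chi$ on the compact sets $X_{\le s}$ and $X\cap H_{v,t}$. I would handle both using the triangulation of Proposition~\ref{prop:4.6}, in which level sets of $\pi$ are subcomplexes up to subdivision. Checking the precise hypotheses of Schapira's formula in this definable setting I would cite from \cite{Curry2018} rather than reprove.
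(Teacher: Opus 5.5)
\textbf{There is a genuine gap in your Step 3.} Steps 1 and 2 are fine and are the standard argument: the hyperplane Radon transform with Schapira inversion, with $\chi(X)$ recovered by Fubini. This is the argument the paper takes from \cite{Curry2018}; the paper gives no proof of its own.

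Step 3 fails. The identity $\chi_c(X_{<t})=\lim_{s\uparrow t}\chi(X_{\le s})$ is false. The left limit equals $\chi(X_{\le t-\delta})$, which is the homotopy Euler characteristic of $X_{<t}$. It is not $\chi_c(X_{<t})=\chi(X_{\le t})-\chi(X_{=t})$, and the two differ for non-compact sets. For example, take $X=[0,1]\subset\R$, $v=1$, $t=1/2$. Your formula gives $1-1=0$, but the level set is a point with $\chi=1$.

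What your difference $\chi(X_{\le t})-\lim_{s\uparrow t}\chi(X_{\le s})$ actually computes is the jump $\chi(X_{=t})-\chi(X_{=t-\delta})$. No formula can recover level data direction by direction. The segment from $0$ to $e_1$ and the point $\{0\}$ have identical Euler curves in direction $e_1$, yet their level sets at height $1/2$ differ. So your claim that "the sublevel data determine the level data" is false for a single $v$. The step you flagged as the main obstacle cannot be proved, because it is not true.

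The repair is to use the antipodal direction, which the hypothesis of part 2 also gives you. We have $\{x\in X\mid x\cdot v\ge t\}=\{x\in X\mid x\cdot(-v)\le -t\}$. Moreover $X_{\le t}\cup X_{\ge t}=X$ and $X_{\le t}\cap X_{\ge t}=X_{=t}$, and all of these sets are compact and definable. Additivity of $\chi$ on such sets therefore gives
\[
\chi(X_{=t})=\chi(X_{\le t})+\chi(X_{\ge t})-\chi(X),
\]
where $\chi(X)=\chi(X_{\le T})$ for $T$ large. Hence equal sublevel curves in all directions give equal level-set Euler characteristics, and part 1 applies.

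With this fix, the right-continuity discussion is no longer needed. Note also that you stated the retraction backwards: it is $X_{\le t+\epsilon}$ that deformation retracts onto $X_{\le t}$, not the other way round.
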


We can make this statement even stronger, showing agreement over a dense subset of directions is sufficient. To do this we will use some properties and consider Euler curves as elements of the $L^1$ space of functions. If $\chi(A)\neq 0$ then the raw Euler curves are not integrable. An easy workaround is to  restrict the domain to $[-R,R]$ where $R>0$ is chosen large enough so $A\subset B(0, R)$. Consider the function space $L^1([-R, R])$, consisting of functions $f$ such that $\|f\|_1=\int_{-R}^R |f(t)|\, dt$ is finite. Then natural metric on this space is $d_1(f,g)=\|f-g\|_1$. As $A\subset B(0, R)$ we have $\ECT(A)(v)(t)=\chi(A)$ for all $t>R$ and $\ECT(A)(v)(t)=0$ for all $t<R$. As such $\int_{-\infty}^\infty |\ECT(A)(v)(t)-\ECT(A)(w)(t)|\, dt=\int_{-R}^R |\ECT(A)(v)(t)-\ECT(A)(w)(t)|\, dt$, showing that any distance between Euler curves is independent of the choice of $R$.

\begin{prop}\label{prop:dense}
   Let $A\in \Set(d)$. We can determine $\ECT(A)$ from $\{(v, \ECT(A)(v))\mid v\in U\}$ whenever $U$ is a dense subset of $S^{d-1}$. 
\end{prop}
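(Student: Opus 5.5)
The plan is to show that the map $v\mapsto \ECT(A)(v)$ is continuous from $S^{d-1}$ to $L^1([-R,R])$ with the metric $d_1$. Once this holds, for any $v\in S^{d-1}$ we choose a sequence $v_n\in U$ with $v_n\to v$. Then $\ECT(A)(v)$ is the $L^1$-limit of the known functions $\ECT(A)(v_n)$, so it is determined as an element of $L^1$, that is, up to a null set. To recover it pointwise we use tameness. For fixed $v$, the function $t\mapsto \ECT(A)(v)(t)$ is a constructible function of $t$: by Theorem~\ref{thm:celldecomp} and Hardt-type triviality it is constant on the open intervals between finitely many critical values. Sublevel sets $\{x\cdot v\le t\}$ are closed, and by Proposition~\ref{prop:4.6} (applied after rotating $v$ to the first coordinate), $A_{\le t}$ deformation retracts onto $A_{\le t}$ from slightly above. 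Hence the Euler curve is right-continuous. A right-continuous step function with finitely many jumps is uniquely determined by its $L^1$ class, via $f(t)=\lim_{s\downarrow t}$ of its essential values. This recovers $\ECT(A)(v)$ exactly.

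For the continuity step I would write the Euler curve as an Euler integral. Expressing $\chi(A_{\le t})$ through a cell decomposition of $A$ compatible with the direction gives $\ECT(A)(v)(t)=\sum_{\sigma}(-1)^{\dim\sigma}\mathbf 1[\ldots]$. A cleaner route uses the local-index formula
\[
\ECT(A)(v)(t)=\sum_{x\in A,\ x\cdot v\le t} \mathrm{ind}_v(x),
\]
which is awkward to control in $v$. Instead I would use a triangulation $K$ of $A$ (definable homeomorphism to a simplicial complex). Then $\chi(A_{\le t})=\sum_{\sigma}(-1)^{\dim \sigma}\chi_c(\sigma\cap\{x\cdot v\le t\})$, where $\chi_c$ is the compactly supported Euler characteristic of the open cell. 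For a single open cell $\sigma$, the function $t\mapsto\chi_c(\sigma\cap\{x\cdot v\le t\})$ equals $0$ for $t<\min_{\bar\sigma} x\cdot v$ and $(-1)^{\dim\sigma}$ for $t\ge\max$. In between it is bounded by a constant $C_\sigma$ depending only on the complexity of $\sigma$, uniformly in $v$ by o-minimal uniform finiteness applied to the definable family $\{(v,t,x): x\in\sigma,\ x\cdot v\le t\}$.

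The main obstacle is turning this into an honest $L^1$ estimate as $w\to v$. My approach is the following. The set $\{(v,t)\}$ where the Euler curve jumps is a definable subset $\Sigma\subset S^{d-1}\times\R$. By uniform finiteness each fibre $\Sigma_v$ has at most $N$ points, and the jump sizes are bounded by some $M$ independent of $v$, again via uniform finiteness of Betti numbers in definable families. By cell decomposition of $\Sigma$, after removing a lower-dimensional set of directions the jump locations vary continuously in $v$. The subtle point is that $\|\ECT(A)(v)-\ECT(A)(w)\|_1\to 0$ must hold at every $v$, including the lower-dimensional bad set. I would handle this by a different argument: the difference is supported where $x\cdot v\le t< x\cdot w$ or the reverse for some $x\in A$. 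That is, it is supported on $t$ in the union over $x$ of intervals of length at most $\|v-w\|R$. The curves coincide whenever no critical behaviour lies in such windows, which forces the bad $t$-set to lie within $\|v-w\|R$ of $\Sigma_v\cup\Sigma_w$. With at most $N$ jumps and integrand bounded by $NM$, this gives $\|\ECT(A)(v)-\ECT(A)(w)\|_1\le C\,\|v-w\|$. Justifying the localization claim, namely that between windows the two sublevel sets are homotopy equivalent, is where I expect the real work. I would try to prove it using Proposition~\ref{prop:4.6} on the compact definable set $A\times[0,1]$ with the interpolated height $x\cdot((1-s)v+sw)$.
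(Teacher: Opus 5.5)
Your architecture matches the paper's proof: $L^1$-continuity of $v\mapsto\ECT(A)(v)$, approximation along a sequence from $U$, and exact recovery via the right-continuous representative. Your retraction argument for right-continuity (as in Lemma~\ref{lem:retract}) is fine; the paper simply cites \cite{ji2023euler} for it, and \cite[Prop.~4.13]{Curry2018} for continuity. The gap is in your own proof of continuity, where the key localization claim is false as stated. Write $A^v_{\le t}=\{x\in A\mid x\cdot v\le t\}$.

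First, saying the difference is supported on the union over $x\in A$ of intervals of length $\le R\|v-w\|$ localizes nothing, since that union typically fills the whole height range. Second, the claim that the discrepancy lies within $R\|v-w\|$ of $\Sigma_v\cup\Sigma_w$, with $\Sigma$ the jump set of the Euler curve, fails. The topology of sublevel sets can change at heights where $\chi$ does not jump, for instance when a component is born at the height where a loop closes. For example, in $\R^2$ let $v=e_2$, $w=(\sin\theta,\cos\theta)$, $L=\cot\theta$, and let $A$ be $m\ge 3$ disjoint thin loops, the $k$-th with lowest point $(-L,2k-2)$ and highest point $(L,2k)$, and both arcs monotone for $v$ and $w$. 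Then $\ECT(A)(v)=\mathbf 1_{[0,2m)}$, while $\ECT(A)(w)$ equals $2$ on $[\cos\theta,(2m-1)\cos\theta)$ and has only four jumps. For $m\theta$ small, $R\|v-w\|\approx 1$, yet the curves differ at $t=m$, which is at distance about $m-1$ from $\Sigma_v\cup\Sigma_w$. (Your abandoned cell formula also has a spurious sign: it should read $\chi(A^v_{\le t})=\sum_\sigma\chi_c(\sigma\cap\{x\cdot v\le t\})$, with no factor $(-1)^{\dim\sigma}$.)

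The fix is to localize near the heights where the topology changes, using interleaving rather than Proposition~\ref{prop:4.6} on $A\times[0,1]$. Let $T_v$ be a finite set such that $A^v_{\le a}\hookrightarrow A^v_{\le b}$ is a homotopy equivalence whenever $(a,b]\cap T_v=\emptyset$; this follows from Hardt triviality of $h_v|_A$ together with your retraction. With $\delta=R\|v-w\|$,
\[
A^v_{\le t-\delta}\subseteq A^w_{\le t}\subseteq A^v_{\le t+\delta}\subseteq A^w_{\le t+2\delta}.
\]
So if $(t-\delta,t+\delta]\cap T_v=\emptyset$ and $(t,t+2\delta]\cap T_w=\emptyset$, the middle inclusion is both surjective and injective on homology, giving $\chi(A^w_{\le t})=\chi(A^v_{\le t})$. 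The exceptional $t$ form a set of measure at most $2\delta(|T_v|+|T_w|)$. Hardt triviality for $(v,x)\mapsto(v,x\cdot v)$ on $S^{d-1}\times A$ gives uniform bounds $|T_v|\le N$ and $|\chi(A^v_{\le t})|\le M$. Hence $\|\ECT(A)(v)-\ECT(A)(w)\|_1\le 8NMR\|v-w\|$. With this repair, or simply the citation the paper uses, the rest of your proof goes through.
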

\begin{proof}
     We know from \cite[Prop.\ 4.13]{Curry2018} that for any fixed $A\in \Set(d)$, its Euler Characteristic Transform $\ECT(A)$ is a continuous function from $S^{d-1}$ to $L^1([-R,R])$. This result is not surprising as it is highly analogous to standard persistent homology stability theorems, observing that  $\|h_v-h_w\|_\infty \leq \|v-w\|_2$. 

 Suppose that we know $\{\ECT(A)(v_i)\}$ for a convergence sequence of directions $v_i \to \hat{v}\in S^{d-1}$. Using the continuity we know that $\ECT(A)(\hat{v})=\lim_{i\to \infty} \ECT(A)(v_i)$ considered as an element of $L^1(\R)$. It is worth observing that this result on its own only determines $\ECT(A)(\hat{v})$ up to a measure zero subset of the domain $\R$.

The Euler characteristic curves of height functions over compact subsets is right-continuous (see \cite{ji2023euler}). For each element in $L^1([-R,R])$ there is a unique representative which is right continuous. This means we thus know $\ECT(A)(\hat{v})$ exactly, not just up to a set of measure zero.

If $U$ is a dense subset then for all $\hat{v}\in S^{d-1}$ there exists a convergent sequence $v_i \to \hat{v}$ with each $v_i\in U$. By this process we can determine $\ECT(A)(v)$ for all $v\in S^{d-1}$
\end{proof}

We can combine Proposition \ref{prop:dense} with \ref{thm:ect-inj} we get the stronger corollary that whenever the Euler characteristic curves agree on a dense subset then the original subsets are equal. 

\begin{cor}\label{cor:inj_dense}
Let $A_1,A_2\in \Set(d)$. If $\ECT(A_1)(v)=\ECT(A_2)(v)$ for all $v\in U$, $U$ a dense subset of $S^{d-1}$ then $A_1=A_2$ as subsets of $\R^d$.    
\end{cor}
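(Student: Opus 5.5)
The plan is to reduce the corollary directly to part (2) of Theorem~\ref{thm:ect-inj}, using Proposition~\ref{prop:dense} to pass from agreement on $U$ to agreement on all of $S^{d-1}$. The only care needed is that Proposition~\ref{prop:dense} says $\ECT(A)$ is \emph{determined} by its restriction to $U$. We have to read this as: the reconstruction procedure depends only on the data $\{(v,\ECT(A)(v)) \mid v\in U\}$, not on $A$ itself. So we should rerun its argument for the two sets simultaneously rather than cite it as a black box.

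Fix $R>0$ with $A_1\cup A_2\subset B(0,R)$, and take an arbitrary $\hat v\in S^{d-1}$. By density choose $v_i\in U$ with $v_i\to\hat v$. The continuity result \cite[Prop.\ 4.13]{Curry2018} gives $\ECT(A_j)(v_i)\to\ECT(A_j)(\hat v)$ in $L^1([-R,R])$ for $j=1,2$. The hypothesis says the two sequences are identical, so their $L^1$ limits coincide. Hence $\ECT(A_1)(\hat v)=\ECT(A_2)(\hat v)$ almost everywhere on $[-R,R]$.

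The main point, and the only real obstacle, is upgrading almost-everywhere equality to equality everywhere. Here I would use right-continuity of Euler curves of height functions on compact definable sets \cite{ji2023euler}. Two right-continuous functions that agree almost everywhere agree everywhere: any $t$ is a limit from the right of points in the full-measure agreement set, so right-continuity forces equality at $t$. Outside $[-R,R]$ both curves equal $0$ for $t<-R$ and equal $\chi(A_j)$ for $t>R$. The value $\chi(A_j)$ also equals the curve's value just inside $R$, so it is already determined, and the two curves agree for all $t\in\R$.

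Since $\hat v$ was arbitrary, $\ECT(A_1)(v)(t)=\ECT(A_2)(v)(t)$ for all $v\in S^{d-1}$ and $t\in\R$. That is, the sublevel-set Euler characteristics agree everywhere, and part (2) of Theorem~\ref{thm:ect-inj} yields $A_1=A_2$.
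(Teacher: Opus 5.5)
Your proposal is correct and follows the paper's route: the paper simply combines Proposition~\ref{prop:dense} with part (2) of Theorem~\ref{thm:ect-inj}. You have unpacked the proof of Proposition~\ref{prop:dense} ($L^1$-continuity in the direction, then right-continuity to upgrade almost-everywhere equality) for the two sets simultaneously, which makes explicit that the reconstruction depends only on the data over $U$.
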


\subsection{Introducing the Reeb Transform}

\begin{definition}[Level Sets]
Let $A\in Set(d)$ and $v\in S^{d-1}$. Define the height function
\[
h_v:A\to\R,\qquad h_v(x):=x\cdot v.
\]
For $t\in\R$, the level set of $A$ in direction $v$ at height $t$ is
\[
LS(A,v,t)
:=\{x\in A\mid h_v(x)=t\}
=\{x\in A\mid x\cdot v=t\}.
\]
\end{definition}

A topological space $X$ is said to be \emph{connected} if it cannot be partitioned into two non-empty, disjoint, open subsets. We define a relation $\sim$ on $Y\in Set(d)$ by
\[
x \sim y \quad\Longleftrightarrow\quad \exists\ \text{a connected subset }C\subseteq X\text{ such that }x,y \in C.
\]
We define an equivalence relation on $A$ by declaring $x\sim y$ if $x$ and $y$ lie in the same
connected component of the level set $LS(A,v,h_v(x))$.

\begin{definition}[Reeb quotient space]\label{def:ReebQuotient}
Let $A \in \Set(d)$ and $v \in S^{d-1}$. Define an equivalence relation $\sim_v$ on $A$ by
declaring that for $x,y \in A$,
\[
x \sim_v y
\quad\Longleftrightarrow\quad
v\cdot x = v\cdot y
\ \text{and}\ 
x,y \text{ lie in the same connected component of } LS(A,v,v\cdot x).
\]
The \emph{Reeb quotient space} of $A$ in direction $v$ is the quotient topological space
\[
\mathcal R_v(A) \ :=\ A/\!\sim_v,
\]
with quotient map $q_v:A\to \mathcal R_v(A)$. The induced height function
$\tilde h_v:\mathcal R_v(A)\to \R$ is defined by $\tilde h_v([x])=v\cdot x$.
\end{definition}

For definable sets, $\mathcal R_v(A)$ is a definable space and is realizable as a finite graph-like
(one-dimensional) object; see~\cite{basu2022reeb}. In this paper we will only require this finiteness
and graph structure for the classes on which we prove injectivity (in particular, subsets of $\R^2$,
one-dimensional embedded complexes, and compact surfaces in $\R^3$).

\begin{definition}[Reeb graph]\label{def:ReebGraph}
Let $A$ and $v$ be as above. The \emph{Reeb graph} of $A$ in direction $v$ is the quotient space
$\mathcal R_v(A)$ from Definition~\ref{def:ReebQuotient}, equipped with the following canonical
graph structure and the function $\tilde h_v$.

For a point $p\in \mathcal R_v(A)$, define its \emph{local degree} $\deg(p)$ to be the number of
connected components of $U\setminus\{p\}$, where $U$ is any sufficiently small neighbourhood of $p$
in $\mathcal R_v(A)$. (For a finite topological graph this number is independent of the choice of
sufficiently small $U$.)

Define the vertex set
\[
V \ :=\ \{\,p\in \mathcal R_v(A)\mid \deg(p)\neq 2\,\},
\]
and define the edge set $E$ as the set of connected components of $\mathcal R_v(A)\setminus V$.
Each edge is homeomorphic to an open interval, and $\tilde h_v$ restricts on each edge to a
homeomorphism to an open interval in $\R$ (possibly unbounded).
\end{definition}

Previous work (e.g.,~\cite{Bauer2020}) defines isomorphisms of Reeb graphs as piecewise-linear
homeomorphisms preserving the associated Reeb functions. In our setting, this is equivalent to the
following formulation in terms of graph isomorphisms, together with preservation of height values.

\begin{definition}[Isomorphism of Reeb graphs]\label{def:ReebIso}
Let $A,B\in\Set(d)$ and let $v,w\in S^{d-1}$. Let $\bigl(\mathcal R_v(A),\tilde h_v\bigr)$ and
$\bigl(\mathcal R_w(B),\tilde h_w\bigr)$ be the associated Reeb graphs as in
Definition~\ref{def:ReebGraph}. We say that $\mathcal R_v(A)$ and $\mathcal R_w(B)$ are
\emph{isomorphic}, and write $\mathcal R_v(A)\cong \mathcal R_w(B)$, if there exists a homeomorphism
\[
\Phi:\mathcal R_v(A)\longrightarrow \mathcal R_w(B)
\]
such that
\[
\tilde h_w\circ \Phi \;=\; \tilde h_v
\qquad\text{on all of }\mathcal R_v(A).
\]
\end{definition}

\begin{definition}[Reeb Transform]\label{def:ReebTransform}
Let $A\in\Set(d)$. The \emph{Reeb Transform} of $A$ is the map
\[
\mathcal{RT}(A): S^{d-1}\longrightarrow \mathcal{G},
\qquad
\mathcal{RT}(A)(v)\ :=\ \bigl(\mathcal{R}_{v}(A),\tilde h_v\bigr),
\]
where $\mathcal{G}$ denotes the collection of Reeb graphs equipped with their induced height
functions (up to the isomorphism of Definition~\ref{def:ReebIso}).
\end{definition}

It is both theoretically and practically useful to regard $\mathcal{RT}(A)$ as a map from the unit
sphere into the metric space of Reeb graphs. This viewpoint motivates approximating the Reeb
Transform by evaluating $\mathcal{RT}(A)(v)$ on a finite sample of directions $v\in S^{d-1}$.
In proposition~\ref{prop:single_dir_stability}, we quantify the approximation error using a directional stability bound for Reeb graphs.

There are many different potential metrics to compare Reeb graphs; see~\cite{Bollen2021} for a recent survey of methods and their tradeoffs.
For the purposes of this paper, we consider functional distortion distance~\cite{Bauer2014}.  
Borrowing ideas from the more well-known Gromov-Hausdorff distance, the functional distortion distance considers all possible maps between the two graphs and defines the distortion of a pair of such maps as the largest change in function values between any path in one graph and its image in the other graph.
The functional distortion distance is then the infimum of the distortion over all possible maps between the two graphs. More formally:

\begin{definition}[Functional distortion distance {\cite{bauer2015strong}}]\label{def:fd}
Let $(R_f,f)$ and $(R_g,g)$ be Reeb graphs equipped with their Reeb functions
$f:R_f\to \R$ and $g:R_g\to \R$.
For $x,x'\in R_f$, define
\[
d_f(x,x') \ :=\ \inf_{\gamma}\Big(\max_{t\in[0,1]} f(\gamma(t))-\min_{t\in[0,1]} f(\gamma(t))\Big),
\]
where the infimum ranges over all continuous paths $\gamma:[0,1]\to R_f$ with
$\gamma(0)=x$ and $\gamma(1)=x'$. Define $d_g$ analogously on $R_g$.

Given continuous maps $\phi:R_f\to R_g$ and $\psi:R_g\to R_f$, set
\[
G(\phi,\psi)
=\{(x,\phi(x)):\ x\in R_f\}\ \cup\ \{(\psi(y),y):\ y\in R_g\},
\]
and
\[
D(\phi,\psi)
=\sup_{(x,y),(x',y')\in G(\phi,\psi)}\frac12\,\bigl|d_f(x,x')-d_g(y,y')\bigr|.
\]
The \emph{functional distortion distance} between $(R_f,f)$ and $(R_g,g)$ is
\[
d_{\mathrm{FD}}\bigl((R_f,f),(R_g,g)\bigr)
=\inf_{\phi,\psi}\max\Big\{
D(\phi,\psi),\ \|f-g\circ\phi\|_\infty,\ \|f\circ\psi-g\|_\infty
\Big\},
\]
where the infimum ranges over all continuous maps $\phi:R_f\to R_g$ and $\psi:R_g\to R_f$ (see figure \ref{fig:functional-distortion-example}).
\end{definition}

\begin{figure}[t]
    \centering

    \begin{minipage}[t]{0.48\textwidth}
        \centering
        \includegraphics[width=\linewidth]{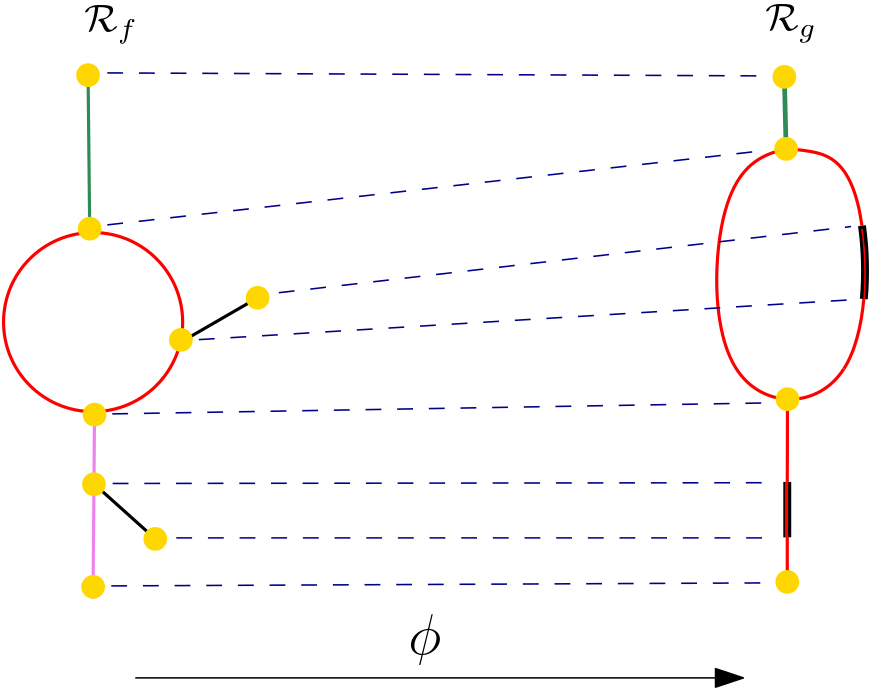}
        
        \smallskip
        \textbf{(a)} 
        $\Phi:\mathcal{R}_f\to\mathcal{R}_g$ maps the additional tail of
        $\mathcal{R}_f$ into the corresponding portion of the loop in
        $\mathcal{R}_g$.
    \end{minipage}
    \hfill
    \begin{minipage}[t]{0.48\textwidth}
        \centering
        \includegraphics[width=\linewidth]{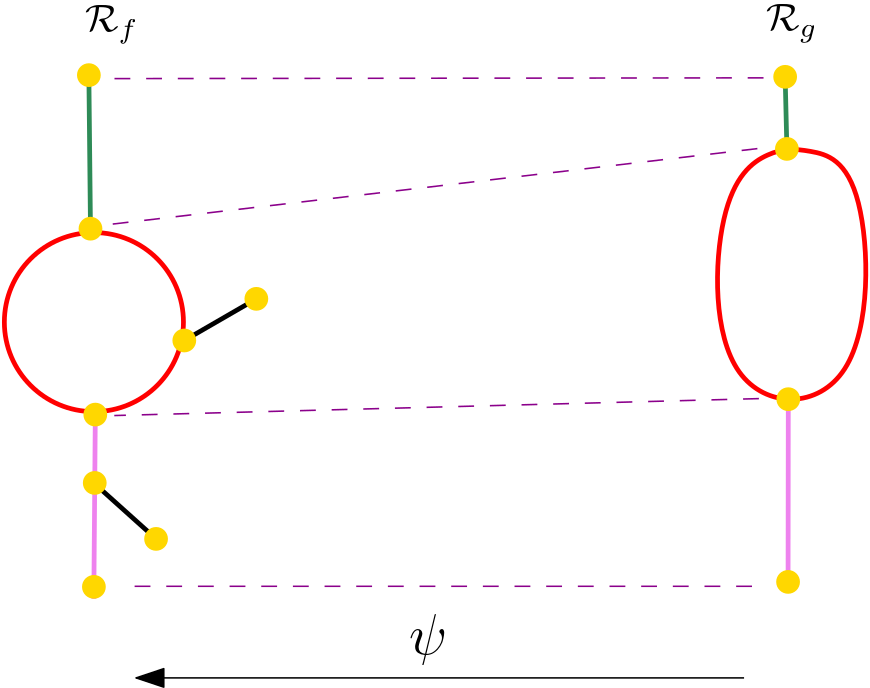}
        
        \smallskip
        \textbf{(b)}
        $\Psi:\mathcal{R}_g\to\mathcal{R}_f$ maps into the common
        trunk-and-loop structure, leaving the additional tail of
        $\mathcal{R}_f$ outside its image.
    \end{minipage}

    \caption{
    Two Reeb graphs $\mathcal{R}_f$ and $\mathcal{R}_g$, together with
    continuous maps $\Phi:\mathcal{R}_f\to\mathcal{R}_g$ and
    $\Psi:\mathcal{R}_g\to\mathcal{R}_f$ illustrating the functional
    distortion distance. The graphs share the same trunk-and-loop structure,
    while $\mathcal{R}_f$ contains an additional tail attached to the loop.
    Dotted lines indicate selected point correspondences induced by the maps,
    and matching colors indicate paths in one graph together with their images
    in the other. The mismatch created by the additional tail is the primary
    source of distortion for the illustrated pair of maps.
    }
    \label{fig:functional-distortion-example}
\end{figure}

\begin{prop}[Directional stability of Reeb graphs]\label{prop:single_dir_stability}
Let $A\in Set(d)$ be non-empty, and let $D>0$ be such that
\[
A\subset \overline{B}(0,D)
:=\{x\in\R^d:\|x\|_2\le D\},
\]
where $\overline{B}(0,D)$ denotes the closed Euclidean ball of radius $D$ centered at the origin.
Fix $v,w\in S^{d-1}$. Then
\[
d_{\mathrm{FD}}\!\Bigl(\bigl(\mathcal R_v(A),\tilde h_v\bigr),\ \bigl(\mathcal R_w(A),\tilde h_w\bigr)\Bigr)
\;\le\; D\,\|v-w\|_2
\;=\; 2D\,\sin\!\Bigl(\frac{\theta}{2}\Bigr),
\]
where $\theta=\arccos\langle v,w\rangle$ and $d_{\mathrm{FD}}$ is the functional distortion distance.
In particular, for small $\theta$ this bound is $O(D\,\theta)$.
\end{prop}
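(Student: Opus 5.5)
The plan is to reduce the statement to the stability theorem for functional distortion distance of Bauer, Ge and Wang \cite{Bauer2014}. That theorem says: for a compact, triangulable space $X$ with two continuous functions $f,g:X\to\R$ whose Reeb graphs are finite, the Reeb graphs satisfy $d_{\mathrm{FD}}\bigl((\mathcal R_f,\tilde f),(\mathcal R_g,\tilde g)\bigr)\le \|f-g\|_\infty$. We apply it with $X=A$, $f=h_v$ and $g=h_w$. Three things must be checked: the hypotheses, the sup-norm bound, and the trigonometric identity.

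\emph{Hypotheses.} The set $A\in\Set(d)$ is compact and definable, so Proposition~\ref{prop:4.6} (applied after a rotation, using that $X(A)$ is definable for $X\in O(d)$) gives a definable triangulation of $A$. Moreover $h_v$ and $h_w$ are linear, hence definable and continuous. By the finiteness result cited after Definition~\ref{def:ReebQuotient} \cite{basu2022reeb}, the quotients $\mathcal R_v(A)$ and $\mathcal R_w(A)$ are finite graphs with the structure of Definition~\ref{def:ReebGraph}. So we are in the setting where the stability theorem applies.

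\emph{Sup-norm bound.} For $x\in A$, Cauchy--Schwarz gives
\[
|h_v(x)-h_w(x)| = |x\cdot(v-w)| \le \|x\|_2\,\|v-w\|_2 \le D\,\|v-w\|_2 ,
\]
so $\|h_v-h_w\|_\infty\le D\|v-w\|_2$ on $A$.

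\emph{Identity.} Since $\|v\|=\|w\|=1$,
\[
\|v-w\|_2^2 = 2-2\langle v,w\rangle = 2-2\cos\theta = 4\sin^2(\theta/2),
\]
so $\|v-w\|_2=2\sin(\theta/2)\le\theta$. This gives both the stated equality and the $O(D\theta)$ remark.

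If we prefer not to cite the stability theorem as a black box, we can construct the maps directly, following \cite{Bauer2014}. Consider the thickenings $\mathcal R_v(A)_\varepsilon$, meaning the Reeb graph of $A\times[-\varepsilon,\varepsilon]$ under $(x,s)\mapsto h_v(x)+s$, with $\varepsilon=D\|v-w\|_2$. The identity on $A$ induces maps
\[
\phi:\mathcal R_v(A)\to\mathcal R_w(A)_\varepsilon\simeq\mathcal R_w(A),\qquad \psi:\mathcal R_w(A)\to\mathcal R_v(A).
\]
Concretely, send $[x]_v$ to the class of $x$ and check well-definedness after the $\varepsilon$-smoothing. Then bound the distortion $D(\phi,\psi)$ and the terms $\|h_v-h_w\circ\phi\|_\infty$ and $\|h_v\circ\psi-h_w\|_\infty$ by $\varepsilon$.

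The main obstacle in this direct route is the distortion term. Paths in $\mathcal R_v(A)$ must be lifted to $A$ and pushed forward, and points in the same $v$-level component need not lie in a single $w$-level component. Handling this requires the smoothing or interleaving argument. For this reason I would rely on the published stability theorem and spend the care on verifying its hypotheses, namely that definable compact sets are triangulable and give finite Reeb graphs.
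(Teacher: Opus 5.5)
Your proposal is correct and is essentially the paper's proof: the Cauchy--Schwarz bound $\|h_v-h_w\|_\infty\le D\|v-w\|_2$, the stability theorem $d_{\mathrm{FD}}\le\|h_v-h_w\|_\infty$ of \cite{Bauer2014}, and the identity $\|v-w\|_2=2\sin(\theta/2)$. Your check that $A$ is triangulable and that its Reeb graphs are finite is slightly more careful than the paper. The optional direct route is unnecessary and does not work as written, because the $\varepsilon$-smoothing of $\mathcal R_w(A)$ is generally not isomorphic to $\mathcal R_w(A)$: its height range grows, and loops spanning less than $2\varepsilon$ in height disappear. A direct proof can instead take $\phi=q_w\circ s_v$ and $\psi=q_v\circ s_w$, where $s_v,s_w$ are continuous sections of the quotient maps.
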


\begin{proof}
Let $h_v,h_w:A\to \R$ be given by $h_v(x)=v\cdot x$ and $h_w(x)=w\cdot x$.
For any $x\in A$,
\[
|h_v(x)-h_w(x)|
=|(v-w)\cdot x|
\le \|v-w\|_2\,\|x\|_2
\le D\,\|v-w\|_2.
\]
Hence $\|h_v-h_w\|_\infty\le D\,\|v-w\|_2$.

By~\cite[Thm.\,4.1]{Bauer2014}, the Reeb graphs of $h_v$ and $h_w$ satisfy
\[
d_{\mathrm{FD}}\bigl((R_{h_v},\bar h_v),(R_{h_w},\bar h_w)\bigr)\ \le\ \|h_v-h_w\|_\infty.
\]
In our notation, $(R_{h_v},\bar h_v)=(\mathcal R_v(A),\tilde h_v)$ and
$(R_{h_w},\bar h_w)=(\mathcal R_w(A),\tilde h_w)$, so the claimed inequality follows.
Finally, since $\|v-w\|_2=\sqrt{2-2\langle v,w\rangle}=2\sin(\theta/2)$, we obtain the stated
trigonometric form.
\end{proof}

\begin{remark}\label{rem:lip}
The map
\[
v \longmapsto \bigl(\mathcal R_v(A),\tilde h_v\bigr)
\]
from $S^{d-1}$ to the space of Reeb graphs equipped with Reeb functions is $D$--Lipschitz with
respect to $d_{\mathrm{FD}}$ and the ambient Euclidean metric on $S^{d-1}$. Equivalently, rotating
the slicing direction by an angle $\theta$ perturbs the Reeb graph by at most
$2D\sin(\theta/2)$ in functional distortion distance.
\end{remark}

\begin{cor}[Directional stability of the Reeb Transform]\label{cor:rt_dir_stability}
Let $A\in \Set(d)$ be non-empty and assume $A\subset B(0,D)$.
Then the Reeb Transform
\[
\mathcal{RT}(A): S^{d-1}\to \mathcal{G},\qquad \mathcal{RT}(A)(v)=(\mathcal R_v(A),\tilde h_v),
\]
is $D$--Lipschitz with respect to the Euclidean metric on $S^{d-1}$ and the functional distortion
distance on $\mathcal{G}$. In particular, for all $v,w\in S^{d-1}$,
\[
d_{\mathrm{FD}}\!\bigl(\mathcal{RT}(A)(v),\mathcal{RT}(A)(w)\bigr)
\le D\|v-w\|_2
=2D\sin\!\Bigl(\frac{\theta}{2}\Bigr),
\quad \theta=\arccos\langle v,w\rangle.
\]
\end{cor}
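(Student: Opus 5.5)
This corollary is a restatement of Proposition~\ref{prop:single_dir_stability} in the language of Definition~\ref{def:ReebTransform}, so the plan is to reduce to that proposition directly. For each $v\in S^{d-1}$, $\mathcal{RT}(A)(v)$ is by definition the isomorphism class of $(\mathcal R_v(A),\tilde h_v)$ in $\mathcal{G}$. The first step is to check that $d_{\mathrm{FD}}$ is well defined on $\mathcal{G}$, i.e.\ independent of the chosen representative. If $\Phi:\mathcal R_v(A)\to \mathcal R'$ is an isomorphism in the sense of Definition~\ref{def:ReebIso}, then $\Phi$ and $\Phi^{-1}$ are continuous and preserve the Reeb functions. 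Composing the maps $\phi,\psi$ in Definition~\ref{def:fd} with $\Phi$ or $\Phi^{-1}$ therefore leaves the three quantities unchanged. For the distortion term, $\Phi$ carries paths to paths with the same function values, so $d_f$ is preserved. Hence $d_{\mathrm{FD}}$ descends to isomorphism classes, and equals zero between isomorphic graphs.

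The second step is to apply Proposition~\ref{prop:single_dir_stability} to the pair $v,w$. The hypothesis there is $A\subset \overline{B}(0,D)$, while here we assume $A\subset B(0,D)$. Since the open ball is contained in the closed ball, the hypothesis holds, and the proposition yields
\[
d_{\mathrm{FD}}\bigl(\mathcal{RT}(A)(v),\mathcal{RT}(A)(w)\bigr)\le D\|v-w\|_2 .
\]
The identity $\|v-w\|_2=\sqrt{2-2\langle v,w\rangle}=2\sin(\theta/2)$ is already recorded in the proof of the proposition. Since $v,w$ are arbitrary, this is precisely the $D$--Lipschitz property with respect to the Euclidean (chordal) metric on $S^{d-1}$, as in Remark~\ref{rem:lip}.

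The only potential obstacle is the well-definedness in the first step, since $\mathcal{G}$ was defined only up to isomorphism. It requires $d_{\mathrm{FD}}$ to be isomorphism invariant, and the pseudometric triangle inequality is not needed beyond that. I expect this to be routine via the composition argument above.

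If one instead wanted the Lipschitz statement with respect to geodesic distance $\theta$, a second remark suffices: $2\sin(\theta/2)\le\theta$, so the map is also $D$--Lipschitz for the geodesic metric.
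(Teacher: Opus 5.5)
Your proposal is correct and matches the paper's proof, which simply invokes Proposition~\ref{prop:single_dir_stability} and rewrites it via the definition of $\mathcal{RT}(A)$. Your extra checks are sound details the paper leaves implicit: that $d_{\mathrm{FD}}$ is invariant under Reeb-graph isomorphism, and that $B(0,D)\subset\overline{B}(0,D)$.
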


\begin{proof}
This is exactly Proposition~\ref{prop:single_dir_stability}, rewritten using the definition of
$\mathcal{RT}(A)$.
\end{proof}

We can use the notion of equivalence between Reeb graphs to define an equivalence relation on Reeb Transforms.

\begin{definition}[Equivalence under the Reeb Transform]\label{def:equiv-simR}
We define an equivalence relation $\sim^{\mathcal{RT}}$ on $\Set(d)$ as follows: for $A,B\in\Set(d)$,
\[
A \sim^{\mathcal{RT}} B
\quad\Longleftrightarrow\quad
\text{for every } v\in S^{d-1},\ 
\bigl(\mathcal R_v(A),\tilde h_v\bigr)\cong \bigl(\mathcal R_v(B),\tilde h_v\bigr),
\]
where $\cong$ denotes isomorphism of Reeb graphs in the sense of Definition~\ref{def:ReebIso}.
\end{definition}

Thus, $A$ and $B$ are equivalent under $\sim^{\mathcal{RT}}$ precisely when their Reeb graphs
agree (up to isomorphism) in every slicing direction.

Unlike the Euler Characteristic Transform, distinct constructible sets may have equivalent Reeb
Transforms. Assume $d\ge 3$ and consider
\[
A=\{x\in\R^d:\ \|x\|\le 2\}\qquad\text{and}\qquad
B=\{x\in\R^d:\ 1\le \|x\|\le 2\}.
\]
Then $A\neq B$; moreover $A$ is contractible whereas $B$ is not.
Fix any $v\in S^{d-1}$. For $|t|>2$ the level set $LS(A,v,t)$ (and $LS(B,v,t)$) is empty, and for
$|t|\le 2$ each level set is connected. Consequently, for every $v$ the Reeb graphs
$\bigl(\mathcal R_v(A),\tilde h_v\bigr)$ and $\bigl(\mathcal R_v(B),\tilde h_v\bigr)$ are both
isomorphic to a single interval with height parameter ranging from $-2$ to $2$.
Hence $A\sim^{\mathcal{RT}} B$ even though $A\neq B$.

This example illustrates that the Reeb Transform records the evolution of connected components of
level sets across directions, but it does not detect the topology within individual components.
Therefore, the Reeb Transform is not injective on the full class $\Set(d)$.

\section{Positive Injectivity Results}

%Firstly we will cover the simple case of  subsets of $\R^2$, where the Reeb transform does completely encode the embedded shape.

\begin{figure}
    \centering
    
\includegraphics[width=\textwidth]{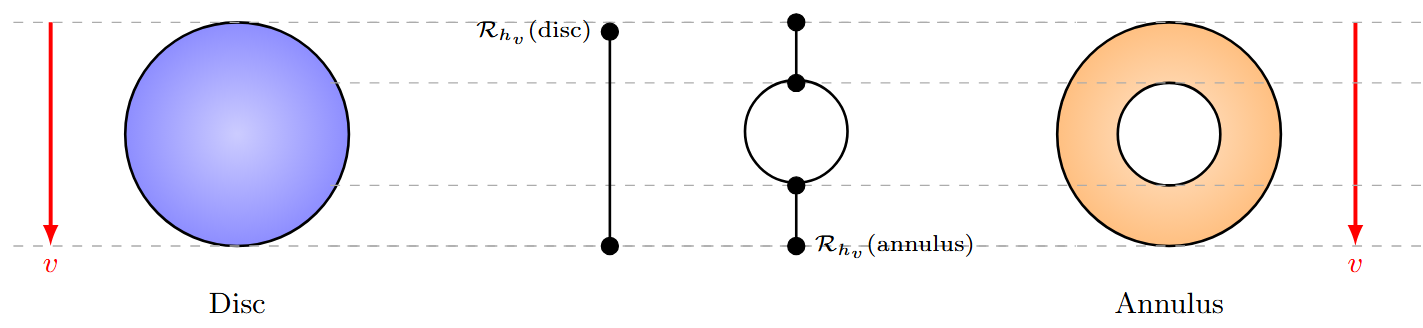}
\caption{Example of two different surfaces in $\mathbb{R}^2$ (a disc and an annulus) and a direction where their Reeb graphs differ. Such a direction must exist for any pair of distinct constructible subsets in $\R^2$}
\label{Fig: S2}
\end{figure}

\begin{lemma}\label{lem:S(2)_inj}
The Reeb Transform is injective when restricted to $\Set(2)$.
\end{lemma}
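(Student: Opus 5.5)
Our plan is to reduce the statement to the injectivity of the Euler Characteristic Transform, specifically part~(1) of Theorem~\ref{thm:ect-inj}. Let $A\in\Set(2)$, $v\in S^1$ and $t\in\R$. The level set $LS(A,v,t)$ is the intersection of $A$ with a line, so it is a compact definable subset of that line. By o-minimality it is a finite disjoint union of points and closed intervals. Each of these components is contractible, so
\[
\chi\bigl(LS(A,v,t)\bigr)=\#\pi_0\bigl(LS(A,v,t)\bigr).
\]
It therefore suffices to show that the Reeb graph $\bigl(\mathcal R_v(A),\tilde h_v\bigr)$, up to the isomorphism of Definition~\ref{def:ReebIso}, determines the number of connected components of each level set.

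This follows directly from the construction. By Definition~\ref{def:ReebQuotient}, the fibre $\tilde h_v^{-1}(t)\subset\mathcal R_v(A)$ is exactly the set of equivalence classes of points of $A$ at height $t$. Each class is one connected component of $LS(A,v,t)$, so
\[
\#\tilde h_v^{-1}(t)=\#\pi_0\bigl(LS(A,v,t)\bigr).
\]
Now let $\Phi:\mathcal R_v(A)\to\mathcal R_v(B)$ be an isomorphism, that is, a homeomorphism with $\tilde h_v\circ\Phi=\tilde h_v$. Then $\Phi$ restricts to a bijection between the fibres over each $t$, so the cardinalities of these fibres agree for every $t$. Since this holds in every direction, $A\sim^{\mathcal{RT}}B$ gives
\[
\chi\bigl(LS(A,v,t)\bigr)=\chi\bigl(LS(B,v,t)\bigr)\quad\text{for all } v\in S^1,\ t\in\R,
\]
and part~(1) of Theorem~\ref{thm:ect-inj} yields $A=B$.

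The only step needing care is the claim that the components of a compact definable subset of a line are points or closed intervals. This is the main point where $d=2$ is used, and it fails for $d\ge 3$, as the ball versus shell example shows. To prove it, identify the line with $\R$ by a definable affine parametrisation. The preimage of the level set is then a definable subset of $\R$, hence a finite union of points and open intervals by o-minimality. Compactness forces its closure to equal the set itself, so the components are points and closed bounded intervals. All of these are contractible, so $\chi$ counts them. The empty level set, for $t$ outside $h_v(A)$, contributes $0$ on both sides. No density or continuity argument such as Proposition~\ref{prop:dense} is needed, since we have agreement in every direction.
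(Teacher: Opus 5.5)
Your proposal is correct and follows the paper's proof essentially step for step. You reduce to part~(1) of Theorem~\ref{thm:ect-inj}, observe that each planar level set is a finite disjoint union of points and closed intervals, and conclude that $\chi(LS(A,v,t))=\bigl|\tilde h_v^{-1}(t)\bigr|$, a quantity preserved by height-preserving isomorphisms. Your added justifications — the o-minimality argument for the structure of the level sets, and the fact that $\Phi$ restricts to bijections between fibres — only make explicit details the paper leaves implicit.
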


\begin{proof}
By Theorem~\ref{thm:ect-inj}, it suffices to show that for each $A\in\Set(2)$, each direction
$v\in S^{1}$, and each $t\in\R$, the Reeb graph $(\mathcal R_v(A),\tilde h_v)$ determines
$\chi(LS(A,v,t))$.

Fix $v\in S^{1}$ and $t\in\R$(see Figure~\ref{Fig: S2}). Since $A\in\Set(2)$, the level set $LS(A,v,t)$ is a finite disjoint
union of points and closed intervals. In particular, every connected component of $LS(A,v,t)$ has
Euler characteristic~$1$, and hence
\[
\chi(LS(A,v,t)) \;=\; \#\{\text{connected components of }LS(A,v,t)\}.
\]

Let $q_v:A\to \mathcal R_v(A)$ be the quotient map. By construction of the Reeb quotient,
the connected components of $LS(A,v,t)$ are precisely the equivalence classes of $\sim_v$ at height $t$,
and therefore are in bijection with the fiber $\tilde h_v^{-1}(t)\subset \mathcal R_v(A)$.
Consequently,
\[
\chi(LS(A,v,t)) \;=\; \bigl|\tilde h_v^{-1}(t)\bigr|,
\]
which is determined by $(\mathcal R_v(A),\tilde h_v)$.
\end{proof}

%records the number of connected components of the intersection 
%$L\cap A$ for \emph{every} line $L$ with direction~$v$.

%%Scholium~\ref{scol:EC} 
%Because $A\subset\R^{2}$ is compact and definable we can apply the theory in \ref{ECT}. This showed that the

%Euler characteristics
%$\chi(L\cap A)$ for \emph{all} lines and \emph{all} directions uniquely
%determines a set $A\in\Set(2)$.  
%In the planar setting $\chi(L\cap A)$ equals the number of connected
%components of $L\cap X$, so $\RT(A)$ already provides this complete data needed to 
%for $X=A$.

% \begin{comment}
% Now suppose $\RT(A)=\RT(B)$ for two sets $A,B\in\Set(2)$.  
% Then for every line $L$ we have the same Reeb graph in direction~$v$,
% and therefore $\chi(L\cap A)=\chi(L\cap B)$.  
% By Scholium~\ref{scol:EC} the two sets must coincide, i.e.\
% $A=B$.  Consequently the Reeb transform is injective on $\Set(2)$.
% \end{comment}

Now we restrict to $1$--dimensional stratified spaces, i.e., definable subsets of $\R^d$ whose
strata have dimension at most $1$. Equivalently, these are (finite) graphs embedded in $\R^d$ with
definable image. We denote this class by $\Set(d,1)$.

%For $A\in \Set(d,1)$ and a fixed direction $v\in S^{d-1}$, the level set
%$LS(A,v,t)=\{x\in A\mid x\cdot v=t\}$ is typically a finite set of points for each $t\in\R$.
%In this generic situation, $\chi(LS(A,v,t))$ is simply the number of points in the fiber
%$\tilde h_v^{-1}(t)\subset \mathcal R_v(A)$.
For a fixed $A\in\Set(d,1)$ and a direction $v\in S^{d-1}$, the level set
$LS(A,v,t)=\{x\in A\mid x\cdot v=t\}$ is generically a finite set of contractible sets (often just single points) for each $t\in\R$.
In this situation, $\chi(LS(A,v,t))$ is simply the number of connected components of $LS(A,v,t)$, or
equivalently the number of points Reeb graph at height $t$. For non-contractible connected components to appear in a level set it is necessary for the height function is constant along a $1$--dimensional stratum,
so that an entire edge of $A$ lies in a single level set. To isolate these degenerate cases we introduce
the following notion.

\begin{definition}[Collapsing direction]\label{def:collapsing}
Let $A\in \Set(d,1)$. We call $v\in S^{d-1}$ a \emph{collapsing direction} for $A$ if there exists
$t\in\R$ such that $LS(A,v,t)$ contains an entire $1$--dimensional stratum of $A$.
\end{definition}

%We now restrict to $1$--dimensional stratified spaces in $\Set(d)$, i.e., definable subsets of $\R^d$
%whose strata have dimension at most $1$. We denote this class by $\Set(d,1)$.
%Thus an element $A\in\Set(d,1)$ decomposes into finitely many $0$--dimensional strata (vertices)
%and finitely many $1$--dimensional strata (edges).

%The only degeneracy we must exclude is when the height function is constant along an entire
%$1$--dimensional stratum, so that an entire edge of $A$ lies in a single level set. 
We want to isolate this phenomenon of collapsing directions. To do this we will use cylindrical definable cell decompositions.

\begin{lemma}\label{lem:collapsing}
Let $A\in \Set(d,1)$. The set of collapsing directions of $A$ has measure zero in $S^{d-1}$.
\end{lemma}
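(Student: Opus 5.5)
The plan is to reduce the statement to a single $1$-dimensional stratum and then take a finite union. By Definition~\ref{def:Set(d,1)}, $A = A_0\sqcup A_1$, where $A_1$ is a definable $C^1$ embedded $1$-manifold. A definable set has finitely many connected components, so $A_1$ has finitely many components $e_1,\dots,e_k$; we take these as the $1$-dimensional strata. (If one prefers to work with strata coming from a cdcd via Theorem~\ref{thm:celldecomp}, each $1$-cell is a connected $C^1$ arc and the same argument applies.) The set of collapsing directions is then $\bigcup_{i=1}^k Z_i$, where
\[
Z_i=\{v\in S^{d-1}\mid h_v \text{ is constant on } e_i\}.
\]
A finite union of null sets is null, so it suffices to show that each $Z_i$ has measure zero.

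Fix one stratum $e=e_i$. Since $e$ is a connected $1$-dimensional set, it contains two distinct points $p\neq q$. If $h_v$ is constant on $e$, then $v\cdot(p-q)=0$. Hence
\[
Z_i\subseteq \{v\in S^{d-1}\mid v\cdot u=0\},\qquad u=(p-q)/\|p-q\|.
\]
The right-hand side is a great sphere $S^{d-2}\subset S^{d-1}$, a closed smooth submanifold of codimension one (for $d=1$ it is empty), so it has measure zero for the standard surface measure on $S^{d-1}$. This already proves the lemma. Sharpening the bound is possible: constancy of $h_v$ on a $C^1$ arc means $v\perp \gamma'(s)$ for all $s$, so $Z_i=\operatorname{span}(e_i-e_i)^{\perp}\cap S^{d-1}$. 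The containment above is all that is needed, however.

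The only step needing care, which I expect to be the main (mild) obstacle, is making precise what a \emph{$1$-dimensional stratum} is in Definition~\ref{def:collapsing}. We must ensure that every stratum contains at least two points and that there are finitely many strata. Finiteness comes from o-minimality, since definable sets have finitely many components and cdcds are finite. Containing two points is automatic for a $1$-dimensional connected set, because a single point is $0$-dimensional. If strata are instead taken as cells of a decomposition adapted to $A$, Theorem~\ref{thm:celldecomp} gives finitely many $1$-cells, each definably homeomorphic to an open interval, so the same argument goes through.

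It is also worth remarking that the collapsing set is closed: each $Z_i$ is the intersection of $S^{d-1}$ with a linear subspace. The good set is therefore open and dense, a fact likely useful in later arguments combined with Corollary~\ref{cor:inj_dense}.
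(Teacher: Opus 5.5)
Your proof is correct and follows the same route as the paper: the directions that collapse a given $1$-dimensional stratum lie in a great $(d-2)$-sphere $\{v\in S^{d-1}\mid v\cdot u=0\}$ for some nonzero $u$, and the union of these null sets over the finitely many strata is null. The only difference is that you take $u$ to be a chord $p-q$ of the stratum, while the paper takes $u=\gamma_i'(s_0)$ from a $C^1$ parametrization supplied by a $C^1$ cylindrical decomposition; your choice is slightly more elementary, since it needs neither differentiability nor a nonvanishing-derivative argument.
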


\begin{proof}
Let $\mathcal{C}$ be a $C^1$ cylindrical definable cell decomposition adapted to $A$.
Subdividing the edges if needed, each of the edges in $A$ are a cell in $\mathcal{C}$. This implies each (subdivided part of an) $1$--dimensional stratum
is a definable $C^1$ embedded curve; in particular, there exists a definable $C^1$ embedding
$\gamma_i:(0,\ell)\to \R^d$ whose image is $e_i$ and such that $\gamma_i'(s)\neq 0$ for all $s\in(0,\ell)$.
Choose any $s_0\in(0,\ell)$ and set $w=\gamma_i'(s_0)\neq 0$.

Suppose that $v\in S^{d-1}$ is such that $e_i\subset LS(A,v,t)$ for some $t\in\R$.
Then the function $h(s)=v\cdot \gamma_i(s)$ is constant on $(0,\ell)$, hence $h'(s)=0$ for all $s$.
In particular,
\[
0=h'(s_0)=v\cdot \gamma_i'(s_0)=v\cdot w.
\]
Therefore, any direction $v$ for which the entire edge $e_i$ is contained in a level set must lie in
\[
\{v\in S^{d-1}\mid v\cdot w=0\},
\]
which is a great $(d-2)$--sphere and thus has measure zero in $S^{d-1}$.

Finally, the set of collapsing directions for $A$ is contained in the union of these measure-zero sets
over the finitely many edges $e_1,\dots,e_k$, and hence has measure zero.
\end{proof}

\begin{lemma}\label{lem:Setd1_inj}
The Reeb Transform is injective when restricted to $\Set(d,1)$.
\end{lemma}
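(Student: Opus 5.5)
The plan is to reduce to Corollary~\ref{cor:inj_dense}, which lets us work only on a dense set of directions. Suppose $A,B\in\Set(d,1)$ with $A\sim^{\mathcal{RT}}B$. Let $N_A$ and $N_B$ be the sets of collapsing directions of $A$ and $B$. By Lemma~\ref{lem:collapsing} both have measure zero, so $U:=S^{d-1}\setminus(N_A\cup N_B)$ has full measure. A full-measure subset of the sphere is dense, since any nonempty open set has positive measure. It therefore suffices to show $\ECT(A)(v)=\ECT(B)(v)$ for every $v\in U$.

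Fix $v\in U$. The first step is to show that every level set $LS(A,v,t)$ is finite. Take a $C^1$ cdcd adapted to $A$, as in the proof of Lemma~\ref{lem:collapsing}, so that $A$ is a finite union of points and $C^1$ arcs $e_i$. On each arc the function $s\mapsto v\cdot\gamma_i(s)$ is definable. Its zero set $\{s : v\cdot\gamma_i(s)=t\}$ is a definable subset of $(0,\ell)$, hence by o-minimality a finite union of points and intervals.

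The main obstacle is ruling out an interval in this zero set. Such an interval would be a sub-arc of $e_i$ lying in a single level set, but Definition~\ref{def:collapsing} only excludes an entire stratum lying in a level set. I would handle this by refining the decomposition. By Theorem~\ref{thm:celldecomp} applied to the definable function $h_v|_A$, and using the monotonicity theorem on each arc, we may subdivide each arc so that on every piece $h_v$ is either strictly monotone or constant. A constant piece is then an entire $1$-cell. To make the collapsing condition apply to it, I would interpret Definition~\ref{def:collapsing} relative to a fixed decomposition into arcs on which $h_v$ is monotone or constant. If necessary, I would instead replace $N_A$ by the set of $v$ for which $h_v$ is constant on some sub-arc. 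At any such $v$ we have $v\cdot\gamma_i'(s)=0$ on an open set of $s$. Taking the finitely many $C^1$ arcs and a countable dense set of parameters, this set lies in a countable union of great spheres, so it still has measure zero. After this, $U$ is still of full measure, and for $v\in U$ each $LS(A,v,t)$ is a finite set of points. The same holds for $B$.

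With finite level sets the computation is immediate. For $v\in U$ and every $t$, each component of $LS(A,v,t)$ is a single point, so
\[
\chi(LS(A,v,t))=\bigl|\tilde h_v^{-1}(t)\bigr|,
\]
exactly as in Lemma~\ref{lem:S(2)_inj}. An isomorphism $\Phi$ of Reeb graphs satisfies $\tilde h_v\circ\Phi=\tilde h_v$, so it restricts to a bijection between fibres over each $t$. Hence $\chi(LS(A,v,t))=\chi(LS(B,v,t))$ for all $t$ and all $v\in U$.

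The last step is to pass from level sets to sublevel sets, which is what Corollary~\ref{cor:inj_dense} requires. Since $\chi$ is additive on definable sets (Euler integration), the sublevel Euler curve $\chi(\{x\cdot v\le t\})$ is determined by the level-set data, via $\chi(\{x\cdot v\le t\})=\int_{-\infty}^{t}\chi(LS(\cdot,v,s))\,d\chi(s)$. Concretely, $\chi(\{x\cdot v\le t\})$ equals the sum of $\chi(LS(s))$ at the finitely many critical values $s\le t$, minus the sum of $\chi(LS(s))$ over one generic $s$ in each open interval between them. This gives $\ECT(A)(v)=\ECT(B)(v)$ on the dense set $U$, and Corollary~\ref{cor:inj_dense} yields $A=B$.
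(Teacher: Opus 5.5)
Your proof is correct, but it differs from the paper's in two places.

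\textbf{Exceptional set.} The paper keeps the exceptional set equal to the collapsing directions of Definition~\ref{def:collapsing}. It argues that for the remaining $v$ every level-set component is contractible: either a sub-arc of a stratum or a star around a $0$-stratum. You instead enlarge the exceptional set to all $v$ for which $h_v$ is constant on some sub-arc. Your countable union of great spheres $\gamma_i'(s_q)^{\perp}$ correctly shows this set is still null. Note that it is this version, not your $v$-dependent reinterpretation of Definition~\ref{def:collapsing}, that the fixed-decomposition argument of Lemma~\ref{lem:collapsing} supports. With it, every level set is finite.

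\textbf{Passing to sublevel sets.} The paper forms $\Psi_v=(q_v^Y)^{-1}\circ\Phi_v\circ q_v^X$ and asserts it is a height-preserving homeomorphism, so that sublevel sets are homeomorphic. You instead go from $\chi(LS(A,v,t))=|\tilde h_v^{-1}(t)|$ to the sublevel Euler curves via the Fubini theorem for the o-minimal Euler characteristic.

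\textbf{Comparison.} Your route is the more robust one. Inverting $q_v^Y$ requires every level-set component to be a single point. The paper's own argument does not guarantee this, since it explicitly allows sub-arcs and stars as components. Euler integration, by contrast, only needs each component to have Euler characteristic $1$.

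In fact either of your changes alone repairs that step. With your enlarged exceptional set, $q_v$ is a continuous bijection from a compact set onto a finite graph, hence a homeomorphism, and the paper's $\Psi_v$ becomes legitimate. With only the paper's contractibility observation, your Fubini step already gives equality of the sublevel Euler curves on a dense set of directions.

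What the paper's construction buys, where it is valid, is a stronger statement. For each generic $v$ it gives a height-preserving homeomorphism between $A$ and $B$, not just equal Euler curves.
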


\begin{proof}
By Theorem~\ref{thm:ect-inj}, it suffices to show that
\[
\mathcal{RT}(X)\cong \mathcal{RT}(Y)\quad\Longrightarrow\quad \ECT(X)=\ECT(Y).
\]

Let $C_X$ and $C_Y$ denote the sets of collapsing directions for $X$ and $Y$
(Definition~\ref{def:collapsing}). By Lemma~\ref{lem:collapsing}, the set
$C_X\cup C_Y\subset S^{d-1}$ has measure zero, hence its complement is dense in $S^{d-1}$.

Fix $v\in S^{d-1}\setminus (C_X\cup C_Y)$. Then no $1$--dimensional stratum of $X$ (or $Y$) lies in a
level set of $h_v$. Consider a connected component $C$ of the level set $LS(X,v,t)$. Suppose that $C$ does not contain any $0$-strata of $X$. This then implies that $C$ is a connected subset of some $1$-strata and thus contractible. If $C$ contains a $0$-strata $x$ then $C$ is a connected subset of $X$ containing $x$ but each of the $1$ strata attached to $x$ can only be partially contained in $C$. Thus no other $0$-strata are in $C$ and $C$ must be homotopic to a star graph which is contractible.

As the fibers are contractible, the Reeb quotient maps
\[
q_v^X:X\to \mathcal R_v(X)
\qquad\text{and}\qquad
q_v^Y:Y\to \mathcal R_v(Y)
\]
are homotopy equivalences. Moreover, by definition of $\tilde h_v$ (Definition~\ref{def:ReebQuotient}),
they satisfy
\[
\tilde h_v^X\circ q_v^X \;=\; h_v|_X
\qquad\text{and}\qquad
\tilde h_v^Y\circ q_v^Y \;=\; h_v|_Y.
\]

Since $\mathcal{RT}(X)\cong \mathcal{RT}(Y)$, for this direction $v$ there exists an isomorphism of
Reeb graphs in the sense of Definition~\ref{def:ReebIso}, i.e., a homeomorphism
\[
\Phi_v:\mathcal R_v(X)\longrightarrow \mathcal R_v(Y)
\]
such that $\tilde h_v^Y\circ \Phi_v=\tilde h_v^X$. Transporting along $q_v^X$ and $q_v^Y$, define
\[
\Psi_v := (q_v^Y)^{-1}\circ \Phi_v\circ q_v^X : X\longrightarrow Y.
\]
Then $\Psi_v$ is a homeomorphism and
\[
h_v|_Y\circ \Psi_v
=\tilde h_v^Y\circ q_v^Y\circ (q_v^Y)^{-1}\circ \Phi_v\circ q_v^X
=\tilde h_v^Y\circ \Phi_v\circ q_v^X
=\tilde h_v^X\circ q_v^X
=h_v|_X.
\]
In particular, for every $t\in\R$ we have
\[
\Psi_v\!\bigl(\{x\in X:\ x\cdot v\le t\}\bigr)=\{y\in Y:\ y\cdot v\le t\},
\]
so the sublevel sets are homeomorphic and hence have the same Euler characteristic. Therefore,
\[
\ECT(X)(v,\cdot)=\ECT(Y)(v,\cdot)
\qquad\text{for all }v\in S^{d-1}\setminus (C_X\cup C_Y).
\]

Finally, Corollary \ref{cor:inj_dense} yields $X=Y$.
\end{proof}

We now turn to the case of compact definable surfaces in $\R^3$. Recall that $\Surf(3)\subset \Set(3)$ denotes the class of sets homeomorphic to a compact surface.
For $A\in \Surf(3)$, our goal is to relate the Euler characteristic of the planar slice $A\cap\{x\cdot v=t\}$ to combinatorial data of the Reeb graph $(\mathcal R_v(A),\tilde h_v)$, specifically the local degrees of its vertices at height $t$. Before doing this we first recall a useful fact about (sufficiently thin) thickened slices retracting to level sets.

\begin{lemma}\label{lem:retract}
    % Let $A\in \Surf(3)$, $v\in S^{2}$ and $t\in \R$ such that $LS(A,v,t)\neq \emptyset$. Then there is an $\epsilon>0$ such that $\{x\in A\mid x\cdot v \in [t-\epsilon, t+\epsilon]\}$ deformation retracts to $LS(A,v,t)$ along $f$ such that 
    % $$f^{-1}[0,s]=\{x\in A\mid x\cdot v \in [t-s\epsilon, t+s\epsilon]\}.$$ 

    Let $A\in Set(d)$, $v\in S^{d-1}$, and $t\in\R$ such that $LS(A,v,t)\neq\emptyset$, and define
    \[
    f:A\to\R_{\geq 0},\qquad f(x):=|h_v(x)-t|.
    \]
    Then there exists $\epsilon>0$ such that 
    \[
    \{x\in A\mid x\cdot v\in[t-\epsilon,t+\epsilon]\}
    =
    f^{-1}([0,\epsilon])
    \]
    deformation retracts to $LS(A,v,t)=f^{-1}(0)$.

\end{lemma}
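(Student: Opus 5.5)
We would prove this by triangulating $A$ compatibly with the height function, using Proposition~\ref{prop:4.6}, and then building the retraction simplex by simplex as in the standard proof of Hardt triviality / collar neighbourhoods for polyhedra. First, after an orthogonal change of coordinates $X\in O(d)$ sending $v$ to the first basis vector (which preserves definability, as noted after the list of facts on definable functions), we may assume $h_v$ is the projection $\pi$ onto the first coordinate. Since $A$ is compact and definable, Proposition~\ref{prop:4.6} applied with $B_1=LS(A,v,t)$ gives a finite simplicial complex $K$ and a definable homeomorphism $\Phi:|K|\to A$ with $\pi\circ\Phi=\pi|_{|K|}$, and $LS(A,v,t)$ a union of images of open simplices. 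Thus it suffices to prove the statement for the polyhedron $|K|$ with the linear height function $\pi$; the retraction transports back via $\Phi$ because $\Phi$ preserves heights, hence maps $f^{-1}([0,\epsilon])$ and $f^{-1}(0)$ to the corresponding sets.

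Next we would choose $\epsilon$. Let $T$ be the finite set of first coordinates of the vertices of $K$, and pick $\epsilon>0$ less than the distance from $t$ to $T\setminus\{t\}$. Then every simplex $\sigma$ of $K$ meeting the slab $\pi^{-1}([t-\epsilon,t+\epsilon])$ has all its vertices either at height $t$, above $t+\epsilon$, or below $t-\epsilon$. We may further subdivide $K$ (e.g.\ by adding the hyperplanes $\pi=t\pm\epsilon$ and $\pi=t$ and re-triangulating linearly without new vertices at heights strictly between $t-\epsilon$ and $t+\epsilon$ other than on those hyperplanes) so that $K_\epsilon:=|K|\cap\pi^{-1}([t-\epsilon,t])$ and its mirror above $t$ are subcomplexes. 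Each simplex $\tau$ of the lower piece then has vertices only at heights $t-\epsilon$ and $t$, i.e.\ $\tau$ is a join $\tau_-*\tau_0$ of a face at height $t-\epsilon$ and a face at height $t$ (with $\tau_0$ possibly empty only if $\tau$ lies entirely at height $t-\epsilon$; such simplices are not a problem since they lie in faces of larger simplices meeting level $t$, and any simplex with $\tau_0=\emptyset$ that is maximal would have to come from a simplex of $K$ not reaching level $t$, which is excluded by the choice of $\epsilon$ together with the fact that every point of the slab lies in a simplex of $K$ meeting $\pi^{-1}(t)$ or lying strictly between two vertex heights straddling $t$).

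The deformation itself is then the straight-line push along join coordinates: a point of $\tau_-*\tau_0$ is $(1-s)a+sb$ with $a\in\tau_-$, $b\in\tau_0$, and we define $H_r((1-s)a+sb)=(1-s')a+s'b$ with $s'=s+r(1-s)$, which at $r=1$ lands in $\tau_0\subset\pi^{-1}(t)$, fixes $\pi^{-1}(t)$, and decreases $|\pi-t|$ monotonically. Because join coordinates are compatible on common faces, these maps glue to a continuous deformation retraction of the lower slab onto the level set, and symmetrically for the upper slab; the two agree on $\pi^{-1}(t)$ (where both are the identity) and so glue to the required retraction of $f^{-1}([0,\epsilon])$ onto $f^{-1}(0)$.

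The main obstacle is the middle step: making the slab a subcomplex whose simplices are genuinely joins of a bottom face and a face at level $t$, i.e.\ controlling the subdivision so that no vertex lies strictly between $t-\epsilon$ and $t$ and no simplex of the slab misses level $t$. We would handle it by first subdividing $K$ so that $\pi^{-1}(t)$ is a subcomplex and then taking $\epsilon$ smaller than the gap to the next vertex height, so that the slab is the intersection of each simplex with $\pi^{-1}([t-\epsilon,t+\epsilon])$, a prism-like polytope that is triangulated without new interior vertices; an alternative route, if this is delicate, is to cite the conic structure / local triviality of definable functions (Hardt's theorem) to obtain that $h_v$ is definably trivial over $[t-\epsilon,t)$ and $(t,t+\epsilon]$ and build the retraction as the mapping-cylinder collapse.
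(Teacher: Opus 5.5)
\textbf{The gap is in the subdivision step.} The rest of your setup is the paper's: rotate $v$ to $e_1$, triangulate by Proposition~\ref{prop:4.6}, take $\epsilon$ below the gap $\delta$ between $t$ and the other vertex heights, and push straight toward the vertices at height $t$. You are right to apply Proposition~\ref{prop:4.6} with $B_1=LS(A,v,t)$; this is what rules out simplices straddling level $t$ with no vertex on it, and the paper's proof uses it implicitly.

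The problem comes from the subdivision along $\pi=t\pm\epsilon$. Once there are vertices on $\pi=t-\epsilon$, the join push $(1-s)a+sb\mapsto(1-s')a+s'b$ is not defined on the bottom faces. A point of $\tau_-$ has $s=0$, so its ``$b$'' is arbitrary, and no choice is continuous.

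Here is a concrete case. Take a triangle of $K$ with $b_1,b_2$ at height $t$ and third vertex at height $t-2\epsilon$. Its lower slab piece is the trapezoid $c_1c_2b_2b_1$, which you would triangulate as $[c_1,b_1,b_2]\cup[c_1,c_2,b_2]$. The points $(1-s)c_1+sb$, with $b\in[b_1,b_2]$ and $s\to 0$, all converge to $c_1$, yet $H_1$ sends them to $b$. In $[c_1,c_2,b_2]$, $c_1$ would go to $b_2$. So $H_1$ has no continuous extension at $c_1$. The simplices with $\tau_0=\emptyset$ that you dismiss are exactly where the gluing fails. As written, your push is only a deformation retraction of the half-open slab $\pi^{-1}((t-\epsilon,t+\epsilon))$.

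\textbf{The repair is not to subdivide at all, which is what the paper does.} Let $x$ be in the closed slab, with carrier simplex $\sigma=[a_0,\dots,a_m]$ of $K$. Write $x=\sum\lambda_i a_i$ and $\alpha=\sum_{\pi(a_i)=t}\lambda_i$.
\begin{itemize}
\item By compatibility with $B_1$, $\sigma$ has no vertices on both sides of $t$. Hence $|\pi(x)-t|\ge(1-\alpha)\delta$.
\item It follows that $\alpha\ge 1-\epsilon/\delta>0$ on the whole closed slab, including height $t\pm\epsilon$.
\item The straight line from $x$ to $\sum_{\pi(a_i)=t}(\lambda_i/\alpha)a_i$ is therefore well defined and agrees on common faces.
\item Along it $|\pi-t|$ decreases monotonically to $0$, and points of $\pi^{-1}(t)$ are fixed.
\end{itemize}
The cause of your failure is that putting vertices on $\pi=t\pm\epsilon$ makes the gap equal to $\epsilon$, which allows $\alpha=0$ on the closed slab.

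Alternatively, you can keep your construction and restrict it to $\pi^{-1}([t-\epsilon/2,t+\epsilon/2])$, which the push leaves invariant. Either way, the re-triangulation of the slab pieces without new interior vertices, which you flag as the main obstacle, becomes unnecessary.
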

\begin{proof}
We start our proof by observing that it is sufficient to prove the case when $v=e_1$. This is because we can construct  a change of basis matrix $X$ such that $Xv=e_1$. From $A\in\Set(d)$ and $X$ invertible we have $XA\in \Set(d)$. From the deformation retract computed for $X(A)$ in direction $e_1$, we can then construct the appropriate one for $v=e_1$ on $A$. This holds as $$\|h_v(x)-t\|=\|h_v(x-tv)\|=\|h_{e_1}(Xx-X(tv)))\|=\|h_{e_1}(Xx)-t\|.$$

Let $\pi$ denote the projection onto the first coordinate. For the case $v=e_1$ we have $f(x)=|\pi(x)-t|$. By Proposition \ref{prop:4.6} there
exists a finite simplicial complex $K$ with vertices in $\R\times\mathbb{Q}^{n}$ and a
definable homeomorphism $\Phi:|K|\to A$ such that
$\pi\circ\Phi=\pi|_{|K|}.$

Choose $\epsilon>0$ such that every vertex of $a\in K$ with $\pi|_{|K|}(a)\neq t$ has $\|\pi|_{|K|}(a)-t\|>\epsilon$.

Let $x\in K$ with $\pi|_{|K|}(x)\in [t-\epsilon, t+\epsilon]$. Let $\sigma=[a_0,a_1, \ldots a_m]$ be the smallest simplex containing $x$ and $a_0, a_1, \ldots a_k$ the vertices of $\sigma$ lying in $\pi_{|K|}^{-1}(t)$ (reordering the index as needed).  We can uniquely write $x=\sum_{i=0}^m \lambda_i a_i$ for some $\lambda_i\in [0,1]$. Set $\alpha:=\sum_{i=0}^k \lambda_i$.
Define $$h^K(x,s)=\sum_{i=0}^k \frac{\lambda_i}{\alpha+(1-s)(1-\alpha)} a_i +\sum_{i=k+1}^m\frac{(1-s)\lambda_i}{\alpha+(1-s)(1-\alpha)} a_i,$$ 
which is the straight line from $x$ to $\sum_{i=0}^k \frac{\lambda_i}{\alpha} a_i$ as $s$ goes from $0$ to $1$.
The function $h^K(x,s)$ is defined for all $x\in K$ with $\pi_{|K|}(x)\in [t-\epsilon, t+\epsilon]$.
Observe that 
$\pi_{|K|}(h^K(x,s))=(1-s)\pi_{|K|}(x)+st$,
so $h^K(\Phi(y),s)$ is defined for all $y\in A$ with $f(y)\in [0,\epsilon]$ and it is continuous for $y\in f^{-1}([0,\epsilon])\times [0,1]$.

Build the deformation retract $h$ from $f^{-1}([0, \epsilon])$ to $f^{-1}(0)$ as 
$h(x,s)=\Phi^{-1}(h^K(\Phi(x),s))$.

%Consider a cylindrical definable cell decomposition $\mathcal{C}_d$ of $\R^{d}$, compatible with $A$, but with a basis chosen so the first basis element is $v$ (so that the first coordinate is $g(x)$). Choose $\epsilon>0$ such that $(t, t+2\epsilon)$ and $(t-2\epsilon, t)$ are each completely contained in single cells of $\mathcal{C}_1$. Subdividing cells as needed, we can assume that $t, t+\epsilon$ and $t-\epsilon$ are all $0$-cells in $\mathcal{C}_1$. 

%Let $a_1<a_2<\ldots a_N$ denote the values that split $\mathcal{C}_1$ into different cells. By construction $a_j=t$ for some $j$. We will restrict our attention to cells $C\in\mathcal{C}_d$ whose projection $\pi_1(C)$ is either $(a_{j-1}, a_j)=(t-\epsilon, t)$, $\{a_j\}=\{t\}$ or $(a_j, a_{j+1})=(t-\epsilon, t)$. 

%We can build a deformation retraction from $\{x\in A\mid x\cdot v \in [t-\epsilon, t+\epsilon]\}$ to $LS(A,v,t)$ using this cell decomposition, retracting along the value of $f$. Here we are using $f^{-1}(s\epsilon)=g^{-1}(t-s\epsilon) \cup g^{-1}(t+s\epsilon)$.  
\end{proof}

\begin{prop}\label{prop:degree}
[Euler characteristic of slices from the Reeb graph]\label{prop:chilevel} Let $A\in \Surf(3)$ and let $v\in S^{2}$ and $t\in \R$ (see Figure~\ref{fig:injectivity3D}).% and choose $\varepsilon>0$ so that $t-\varepsilon$ and $t+\varepsilon$ are regular
%values of $h_v$ and $[t-\varepsilon,t+\varepsilon]$ contains no critical values for the Reeb graph other than possibly $t$.
Let $(\mathcal R_v(A),\tilde h_v)$ be the Reeb graph 
$A$ in direction $v$
(Definition~\ref{def:ReebGraph}), and let $V(\mathcal R_v(A))$ denote its vertex set.
Then
\begin{equation}\label{eq:chi_deg}
\chi\!\bigl(LS(A,v,t)\bigr)
\;=\;
\sum_{\substack{p\in V(\mathcal R_v(A))\\ \tilde h_v(p)=t}}\bigl(2-\deg(p)\bigr),
\end{equation}
where $\deg(p)$ is the local degree from Definition~\ref{def:ReebGraph}.

Equivalently, if $C(A,v,s)$ denotes the number of connected components of $LS(A,v,s)$, then
\begin{equation}\label{eq:chi}
\chi\!\bigl(LS(A,v,t)\bigr)
\;=\;
2\,C(A,v,t)\;-\;C(A,v,t-\varepsilon)\;-\;C(A,v,t+\varepsilon).
\end{equation}
for $\epsilon>0$ sufficiently small.
\end{prop}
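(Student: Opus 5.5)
We prove \eqref{eq:chi} first and then derive \eqref{eq:chi_deg} from it. Write $L_t := LS(A,v,t)$ and $N_\varepsilon := f^{-1}([0,\varepsilon])$ with $f=|h_v-t|$. By Lemma~\ref{lem:retract} there is $\varepsilon>0$ such that $N_\varepsilon$ deformation retracts onto $L_t$, so $\chi(L_t)=\chi(N_\varepsilon)$. Moreover $N_\varepsilon$ and $L_t$ have the same connected components, so $C(A,v,t)$ equals the number of components of $N_\varepsilon$.

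Shrinking $\varepsilon$ using o-minimality (finitely many critical values of $h_v$, via a cdcd adapted to $A$ in a basis with first vector $v$), we may assume the following. The values $t\pm\varepsilon$ are regular, in the sense that over $(t-2\varepsilon,t)$ and $(t,t+2\varepsilon)$ the map $h_v$ is a definably trivial fibration. By Hardt triviality, the pieces $A\cap h_v^{-1}([t-\varepsilon,t))$ and $A\cap h_v^{-1}((t,t+\varepsilon])$ are then products of a fiber with an interval.

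Because $A$ is a closed surface and $t\pm\varepsilon$ are regular, the slices $L_{t\pm\varepsilon}$ are compact $1$-manifolds. They are the transverse (collared) boundary of the compact surface-with-boundary $N_\varepsilon$, so they are disjoint unions of circles with $\chi(L_{t\pm\varepsilon})=0$. The main obstacle is this step: in the definable, merely topological setting we must justify that a regular slice of a topological surface is a $1$-manifold and that $N_\varepsilon$ is a compact surface with boundary $L_{t-\varepsilon}\sqcup L_{t+\varepsilon}$. I would argue this through the trivialization: near $L_{t+\varepsilon}$, $A$ is locally homeomorphic to $F\times(t,t+2\varepsilon)$, and a manifold of dimension $2$ that splits locally as $F\times\R$ forces $F$ to be a $1$-manifold by invariance of domain and local homology.

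Next we compute $\chi(N_\varepsilon)$ in terms of components. Doubling $N_\varepsilon$ along its boundary circles gives a closed surface, but it is more direct to use the classification of compact surfaces with boundary. Each component $N^j$ of $N_\varepsilon$ has $\chi(N^j)=2-2g_j-b_j$ (or $2-k_j-b_j$ in the nonorientable case), where $b_j$ is the number of boundary circles. I expect to need $g_j=0$ and no crosscaps, i.e.\ each $N^j$ is a planar sphere-with-holes. The natural route is an embedding argument in $\R^3$: $N^j$ sits in the slab $\{t-\varepsilon\le x\cdot v\le t+\varepsilon\}$ with boundary in the bounding planes.

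That argument is genuinely delicate, so the alternative is the key reduction. Each boundary circle of $N_\varepsilon$ is a component of $L_{t\pm\varepsilon}$, and hence a point of $\mathcal R_v(A)$ at height $t\pm\varepsilon$. So the components of $L_{t\pm\varepsilon}$ correspond to edges of the Reeb graph meeting the vertices at height $t$, and
\[
\sum_{\tilde h_v(p)=t}\deg(p)=C(A,v,t-\varepsilon)+C(A,v,t+\varepsilon).
\]
Points at height $t$ of degree $2$ contribute $0$ to both sides and can be ignored. Granting planarity, $\chi(N^j)=2-b_j$. Summing over the $C(A,v,t)$ components gives
\[
\chi(L_t)=2C(A,v,t)-C(A,v,t-\varepsilon)-C(A,v,t+\varepsilon),
\]
which is \eqref{eq:chi}, and grouping the sum by vertex yields \eqref{eq:chi_deg}.

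For the planarity step, I would first try a direct argument. Since $h_v$ is a definable function on the compact surface $N^j$ with all its critical behaviour at the single level $t$, a genus or crosscap in $N^j$ would have to be supported inside $L_t$ itself, because $N^j$ retracts onto $L_t\cap N^j$. But $L_t\cap N^j$ is a planar compact set, a subset of a plane in $\R^3$. Its regular neighbourhood in $A$ is a closed neighbourhood of a planar graph in which every cycle is two-sided, since a planar loop separates the plane and $A$ is embedded. This should force the neighbourhood to be orientable of genus zero.
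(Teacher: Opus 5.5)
Your outline is the paper's own route: retract the slab $N_\varepsilon$ onto $L_t$ (Lemma~\ref{lem:retract}), identify $\deg(p_j)$ with the number $b_j$ of boundary circles of the component $N^j$, and apply $\chi(N^j)=2-2g_j-b_j$. Everything hinges on $g_j=0$, and there the proposal has a genuine gap.

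Crosscaps are not the issue. A closed surface in $\R^3$ separates, so it and every $N^j$ are orientable. The genus is the issue. Knowing that $N^j$ retracts onto a planar compact set, or that its cycles are two-sided, tells you nothing about $g_j$. Every compact orientable surface with nonempty boundary is homotopy equivalent to a wedge of $2g_j+b_j-1$ circles, which is planar; for example, a torus minus two discs is homotopy equivalent to a planar wedge of three circles. What you would actually need is that the intersection pairing of $A$ vanishes on cycles of $L_t$. The position of $L_t$ in the plane $\{x\cdot v=t\}$ does not control this, because the local sign at a common point depends on how $A$ passes through the plane there. The paper's proof makes the same inference (``$Y_j$ is homotopic to a subset of $\R^2$, so it has genus $0$''), so it does not close the gap either.

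In fact this step, and the statement itself, fail. Take the round torus $A=\{(\sqrt{x^2+y^2}-R)^2+z^2=r^2\}$ with $R>r>0$. Let $v=(-\sin\theta,0,\cos\theta)$ with $\sin\theta=r/R$, the unit normal of a bitangent plane through the origin.

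\begin{itemize}
\item The height $h_v$ has four nondegenerate critical points, with values $-2r$, $0$, $0$, $2r$.
\item $LS(A,v,0)$ is the pair of Villarceau circles meeting in the two saddle points. It is connected with $\chi=-2$.
\item For $0<|s|<2r$ the level set bounds the disc $\{h_v\le s\}$ or $\{h_v\ge s\}$, so it is a single circle.
\end{itemize}

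Thus $C(A,v,0)=C(A,v,\pm\varepsilon)=1$, and the point of $\mathcal R_v(A)$ at height $0$ has degree $2$. Both \eqref{eq:chi_deg} and \eqref{eq:chi} then give $0\neq-2$. Here $N_\varepsilon$ is the torus minus two discs, which has genus one. Consistently with the intersection criterion above, the two Villarceau circles represent classes with intersection number $\pm 2$ in $A$.

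So no argument can prove the proposition for all $v$ and $t$. You need a genericity hypothesis on the direction. For smooth $A$, one such hypothesis is that $h_v$ is Morse with distinct critical values; then every $N^j$ is a disc, annulus or pair of pants, and the genus-zero step holds. The injectivity application would then have to go through Corollary~\ref{cor:inj_dense}, using a dense set of such directions.

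For merely definable surfaces even this needs care. At a cone point of $A$, the cyclic order of the branches of $L_t$ in $A$ need not agree with their order in the plane, and this can produce the same defect on an open set of directions.
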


\begin{proof}
Let $q_v:A\to \mathcal R_v(A)$ be the Reeb quotient map. By construction,
the connected components of $LS(A,v,t)$ are precisely the $\sim_v$--equivalence classes at height $t$,
hence are in bijection with the fiber $\tilde h_v^{-1}(t)\subset \mathcal R_v(A)$.
Write
\[
LS(A,v,t)=X_1\sqcup\cdots\sqcup X_r,
\qquad
p_j:=q_v(X_j)\in \tilde h_v^{-1}(t).
\]
Euler characteristic is additive over disjoint unions, so it suffices to compute $\chi(X_j)$ in terms
of the local behavior of $(\mathcal R_v(A),\tilde h_v)$ at $p_j$.

We have $h_v^{-1}[t-\epsilon, t+\epsilon]$ splits into disjoint connected components, $Y_1, \ldots, Y_r$ such that $X_j\subset Y_j$. For sufficiently small $\epsilon$ we have the restriction of $\mathcal R_v(A)$ to heights $[t-\epsilon, t+\epsilon]$ is the disjoint union of the Reeb graphs $\mathcal{R}_v(Y_j)$. 

From now on let us restrict to a single $Y_j$. By Lemma \ref{lem:retract}, for $\epsilon$ small enough we have $Y_j$ deformation retracts to $X_j$ and since both $X_j$ and $Y_j$ are compact this deformation retract implies $\chi(X_j)=\chi(Y_j)$ 
(note that the Euler characteristic used in the ECT and in its injectivity proofs all involve the definable version of Euler Characteristic which thankfully agrees with the standard version when the sets are compact). For $\epsilon$ small, $t-\epsilon$ and $t+\epsilon$ are not critical and we have $\partial Y_j=LS(Y_j, v, t-\epsilon) \cup LS(Y_j, v, t+\epsilon)$ is compact and homotopic to a disjoint union of $deg(p_j)$ copies of $S^1$. 

As the surface $Y_j$ is homotopic to a subset of $\R^2$ we further can infer that it has genus $0$. Thus, $Y_j$ is a surface with boundary with genus $0$ and $\deg(p_j)$ boundary curves, so from the classification of surfaces with boundary $\chi(Y_i)=2-\deg(p_i)$.

\end{proof}

As we can infer the Euler characteristic of every level set in every direction from the Reeb Transforms it follows from Theorem \ref{thm:ect-inj} we get the following corollary.

\begin{cor}\label{cor:S(3)_inj}
The Reeb Transform is injective when restricted to $\Surf(3)$.
\end{cor}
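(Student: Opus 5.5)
The plan is to reduce the statement to part (1) of Theorem~\ref{thm:ect-inj}, using Proposition~\ref{prop:chilevel} to read off level-set Euler characteristics from the Reeb graphs. Suppose $A,B\in\Surf(3)$ with $A\sim^{\mathcal{RT}}B$. Fix a direction $v\in S^2$. By Definition~\ref{def:equiv-simR} there is a homeomorphism $\Phi_v:\mathcal R_v(A)\to\mathcal R_v(B)$ with $\tilde h_v^B\circ\Phi_v=\tilde h_v^A$. The goal is to show $\chi(LS(A,v,t))=\chi(LS(B,v,t))$ for every $t\in\R$. Once this holds for all $v$, Theorem~\ref{thm:ect-inj}(1) gives $A=B$.

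The key step is that $\Phi_v$ preserves all the data appearing on the right-hand side of \eqref{eq:chi_deg}. The local degree of Definition~\ref{def:ReebGraph} is defined purely topologically, as the number of components of $U\setminus\{p\}$ for small neighbourhoods $U$. A homeomorphism carries small neighbourhoods to small neighbourhoods, so $\deg(\Phi_v(p))=\deg(p)$. Consequently $\Phi_v$ maps the vertex set $V(\mathcal R_v(A))$ bijectively onto $V(\mathcal R_v(B))$. Because $\Phi_v$ intertwines the height functions, it restricts to a bijection between the vertices at height $t$ in the two graphs, and this bijection preserves degree. Applying \eqref{eq:chi_deg} to $A$ and to $B$ then gives
\[
\chi(LS(A,v,t))=\sum_{\substack{p\in V(\mathcal R_v(A))\\ \tilde h_v(p)=t}}(2-\deg p)=\sum_{\substack{q\in V(\mathcal R_v(B))\\ \tilde h_v(q)=t}}(2-\deg q)=\chi(LS(B,v,t)).
\]
Heights $t$ carrying no vertex need a separate check, since the sum in \eqref{eq:chi_deg} is then empty and gives $0$. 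In that case I would argue consistency with \eqref{eq:chi}: the count $C(A,v,\cdot)$ equals $|\tilde h_v^{-1}(\cdot)|$, which $\Phi_v$ preserves, and it is locally constant near $t$. So both formulas agree for $A$ and $B$ at every height.

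The main obstacle is one of well-definedness rather than this bookkeeping. Definition~\ref{def:ReebGraph} asserts that the local degree is independent of $U$, but only for finite topological graphs. One must therefore know that $\mathcal R_v(A)$ is a finite graph for every $v$, including degenerate directions where a level set contains a whole 2-dimensional piece of $A$. For this I would cite the finiteness result of \cite{basu2022reeb} quoted after Definition~\ref{def:ReebQuotient}.

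Alternatively, one can avoid needing all directions. Use \eqref{eq:chi}, which only involves component counts, and pass through Corollary~\ref{cor:inj_dense}. Along that route one must check that the Euler characteristics of level sets determine the sublevel Euler curves, for instance by the additivity argument of \cite{Curry2018}. I would present the direct argument through Theorem~\ref{thm:ect-inj}(1) and note this alternative only as a safeguard.
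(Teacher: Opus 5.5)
Your proposal follows the paper's route exactly: read off $\chi(LS(A,v,t))$ from each single Reeb graph via Proposition~\ref{prop:chilevel}, then apply Theorem~\ref{thm:ect-inj}(1). The paper's one-line proof rests on the same step, and your bookkeeping is fine: $\Phi_v$ preserves local degree and height.

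The gap is that step itself. The identities \eqref{eq:chi_deg} and \eqref{eq:chi} are false in some directions, so one Reeb graph does not determine the Euler characteristics of the slices. Here is a counterexample.
\begin{itemize}
\item Take a standard torus $T$ with radii $R>r$, centred at the origin, and let $v$ be the unit normal of a bitangent plane through its centre.
\item Then $h_v|_T$ is Morse with a minimum, a maximum, and two saddles, both at height $0$.
\item $LS(T,v,0)$ is the pair of Villarceau circles meeting in two points. It is connected with $\chi=-2$, while the slices just below and just above height $0$ are single circles.
\item Hence $\mathcal R_v(T)$ is an interval, the point at height $0$ has degree $2$, and both \eqref{eq:chi_deg} and \eqref{eq:chi} return $0$ instead of $-2$.
\item The round sphere with the same centre and the same width in direction $v$ has an isomorphic Reeb graph in direction $v$. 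Yet its slice at height $0$ is a circle, with $\chi=0$.
\end{itemize}
So the equality $\chi(LS(A,v,t))=\chi(LS(B,v,t))$ cannot be derived direction by direction. In particular, your separate check at vertex-free heights appeals again to \eqref{eq:chi}, and it fails exactly at such a height. In the proof of Proposition~\ref{prop:chilevel}, the faulty inference is that $Y_j$ has genus $0$ because it is homotopy equivalent to a planar set. Here $Y_j$ is a torus minus two discs, which is homotopy equivalent to a wedge of three circles and has genus $1$.

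The finiteness of $\mathcal R_v(A)$ is therefore not the real obstacle. The route you keep only as a safeguard is what is actually needed, but with a different justification. You must:
\begin{itemize}
\item exhibit a dense set $U\subset S^2$ of directions on which \eqref{eq:chi} genuinely holds for both $A$ and $B$;
\item convert slice Euler characteristics into Euler curves by the Euler-calculus Fubini formula $\chi(\{x\in A: x\cdot v\le t\})=\int_{(-\infty,t]}\chi(LS(A,v,s))\,d\chi(s)$;
\item finish with Corollary~\ref{cor:inj_dense}.
\end{itemize}
For smooth surfaces, take $U$ to be the directions in which $h_v$ is Morse with distinct critical values. Each critical level then contains a single nondegenerate critical point. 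Closed surfaces embedded in $\R^3$ are orientable, so every saddle changes the number of slice components by one. Its slice component is then a figure eight, the corresponding Reeb graph point has degree $3$, and \eqref{eq:chi_deg} holds at every height. For a merely definable surface (for instance a polyhedral one), the density of such good directions still has to be proved. Neither your proposal nor the paper's argument supplies this.
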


%\begin{proof}
%Finally, Theorem~\ref{thm:ect-inj} yields $A=B$.
%\end{proof}
The proof does not immediately generalise to surfaces embedded in $\R^d$ for $d>3$ because we have less control over the characterisation of $LS(A,v,t)$. When $d=3$ then $LS(A,v,t)$ is planar and each connected component has zero genus, and hence its Euler characteristic is determined by the number of boundary components. If we consider $d>3$ then $LS(A,v,t)$ could contain a connected component with non-zero genus. However, we conjecture it would be possible to modify the notion of collapsing direction to this setting. The goal would be to prove Proposition \ref{prop:degree} for a dense subset of directions. This could then be combined with Corollary \ref{cor:inj_dense} to show the Reeb Transform is injective when restricted to surfaces without boundary embedded in $\R^d$.

\begin{figure}
\center
\includegraphics[width=0.9\textwidth]{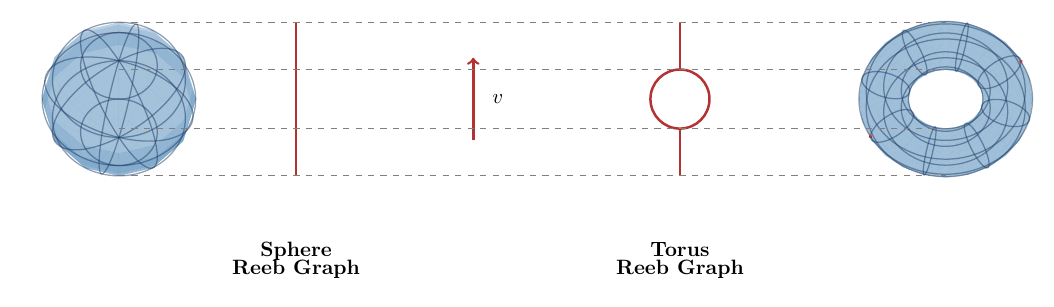}
\caption{Example of two different surfaces in $\mathbb{R}^3$ (a sphere and a torus) and a direction where their Reeb graphs differ. Such a direction must exist for any pair of distinct surfaces in $\R^3$}
\label{fig:injectivity3D}
\end{figure}

Another injectivity result relating to surfaces in $\R^3$ views the surface as a dividing hypersurface and considers the pair of the Reeb Transforms for the interior and exterior sets. 

\begin{lemma}\label{lem:surf_separates}
Let $A\in \Surf(3)$ be connected. Then $\R^{3}\setminus A$ has exactly two path-connected components, one bounded and one unbounded.
\end{lemma}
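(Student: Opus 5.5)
The plan is to prove this as a Jordan--Brouwer separation theorem for connected closed surfaces in $\R^3$. In the o-minimal setting we can use a triangulation, so Alexander duality applies to a finite polyhedron.

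\emph{Step 1: triangulate.} Since $A$ is compact and definable, it has a definable triangulation. Applying Proposition~\ref{prop:4.6}, or the triangulation theorem in \cite{coste1999introduction}, to $A$ inside a large closed ball, we see that $A$ is homeomorphic to a finite simplicial complex. Because $A$ is a compact connected $2$-manifold without boundary, this complex is a closed connected triangulated surface. Its Čech and singular cohomology agree, so $\check H^2(A;\Z/2)\cong \Z/2$ and $\check H^0(A;\Z/2)\cong\Z/2$. Working with $\Z/2$ coefficients avoids any need to discuss orientability, which is automatic in $\R^3$ but would otherwise have to be proved.

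\emph{Step 2: Alexander duality.} Alexander duality in $S^3=\R^3\cup\{\infty\}$ for the compact, locally contractible set $A$ gives
\[
\tilde H_0(S^3\setminus A;\Z/2)\cong \check H^{2}(A;\Z/2)\cong\Z/2 .
\]
So $S^3\setminus A$ has exactly two components. These are open subsets of a manifold, so their connected components and path components coincide.

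\emph{Step 3: pass back to $\R^3$.} Adding the point $\infty$ to the unbounded part does not change the component count. Choose $R$ with $A\subset B(0,R)$. The set $\R^3\setminus \overline B(0,R)$ is connected, so it lies in a single component $U$, and $U$ is unbounded. The other component $V$ is disjoint from $\R^3\setminus \overline B(0,R)$, so $V\subset \overline B(0,R)$ is bounded. Hence $\R^3\setminus A$ has exactly two path components, one bounded and one unbounded.

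\emph{Main obstacle.} The key point is justifying that Alexander duality applies, which requires $A$ to be a reasonable compactum (an ENR, or a polyhedron up to homeomorphism). The definable triangulation supplies exactly this.

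If one prefers to avoid duality, there is a more elementary route. Use a cdcd adapted to $A$ together with a mod $2$ intersection number along generic definable rays transverse to the $2$-cells; such rays exist because the non-generic directions form a set of measure zero, by a dimension count analogous to Lemma~\ref{lem:collapsing}. The parity of the number of crossings is locally constant off $A$, so it separates the points of $\R^3\setminus A$ into at least two classes. Local two-sidedness near each $2$-cell, combined with connectedness of $A$, then gives at most two components.
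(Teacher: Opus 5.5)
Your main argument is correct and is essentially the paper's proof: Alexander duality in $S^3$ with $\Z/2$ coefficients gives $\widetilde H_0(S^3\setminus A;\Z/2)\cong H^2(A;\Z/2)\cong\Z/2$, hence two components, and you then pass back to $\R^3$. Two details are, if anything, slightly more careful than the paper: you justify duality via a definable triangulation rather than the paper's appeal to local contractibility, and you give an explicit large-ball argument for boundedness, including the point that deleting $\infty$ does not disconnect the unbounded component. Your alternative intersection-number route is only a sketch, but it is not needed.
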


\begin{proof}
We use Alexander duality for the one-point compactification $S^{3}=\R^{3}\cup\{\infty\}$.
Throughout the proof, the notation $H_i(X;G)$ and $H^i(X;G)$ denotes singular homology and cohomology of $X$ with coefficients in an abelian group $G$. We also write $\widetilde H_i(X;G)$ for \emph{reduced} homology. 
In degree $0$, reduced homology removes the ubiquitous copy of $G$ coming from nonemptiness: if $X$ has $k$ connected components, then
\[
H_0(X;G)\cong G^{\,k}
\qquad\text{and}\qquad
\widetilde H_0(X;G)\cong G^{\,k-1}.
\]
In particular, $\widetilde H_0(X;G)\cong G$ holds if and only if $X$ has exactly two connected components.

Since $A$ is definable and compact, it is locally contractible, so Alexander duality applies.
Using coefficients $G=\Z/2$ (to avoid orientability issues), Alexander duality gives
\[
\widetilde H_{0}(S^{3}\setminus A;\Z/2)\ \cong\ H^{2}(A;\Z/2).
\]
Because $A$ is homeomorphic to a connected compact surface, its top cohomology with $\Z/2$ coefficients satisfies $H^{2}(A;\Z/2)\cong \Z/2$. Therefore
\[
\widetilde H_{0}(S^{3}\setminus A;\Z/2)\cong \Z/2,
\]
which implies that $S^{3}\setminus A$ has exactly two connected components.

Finally, $\R^{3}\setminus A$ is definable and hence locally path-connected, so its connected components coincide with its path-connected components. One component contains $\infty$ and corresponds to the unbounded component of $\R^{3}\setminus A$. The other component is bounded since $A$ is compact.
\end{proof}

We use the two connected components of the complement of a connected surface $A\in\Surf(3)$
(Lemma~\ref{lem:surf_separates}) to define the interior and exterior of $A$.

\begin{definition}[Interior and exterior regions]\label{def:int_ext}
Let $A\in \Surf(3)$ be connected. By Lemma~5, the complement $\R^{3}\setminus A$ has exactly two
path-connected components: one bounded and one unbounded. Denote the bounded component by
$\omega_{\mathrm{int}}$ and the unbounded component by $\omega_{\mathrm{ext}}$.

We call $\omega_{\mathrm{int}}$ the \emph{open interior region} of $A$ and $\omega_{\mathrm{ext}}$ the
\emph{open exterior region} of $A$. We define the \emph{closed interior region} and
\emph{closed exterior region} of $A$ by
\[
A^{\mathrm{int}} := \overline{\omega_{\mathrm{int}}}
\qquad\text{and}\qquad
A^{\mathrm{ext}} := \overline{\omega_{\mathrm{ext}}},
\]
where the closure is taken in $\R^{3}$.
\end{definition}

%We have  $A^{\mathrm{int}}$ is the closed, bounded $3$‑dimensional subset whose boundary is $A$, while $A^{\mathrm{ext}}$ is the closure of the unbounded complement together with the same boundary $A$.
 
\begin{figure}
\center
\includegraphics[width=0.8\textwidth]{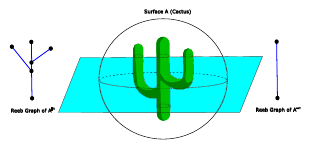}
\caption{Illustrating Theorem~\ref{theorem:ext_int}. 
A dividing surface $A\subset\mathbb{R}^3$ splits space into an interior ($A^{int}$) and exterior ($A^{ext}$). 
By computing the Reeb Transforms of these two extended domains, $\mathcal{RT}(A^{int})$ and $\mathcal{RT}(A^{ext})$, 
one obtains all the necessary information to reconstruct $A$ itself.}
\label{fig:interiorRegions}
\end{figure}

\begin{thm}\label{theorem:ext_int}
Let $A\in \Surf(3)$ be connected, and let $A^{\mathrm{int}}$ and $A^{\mathrm{ext}}$ be the closed
interior and exterior regions from Definition~\ref{def:int_ext}.%
\footnote{Although $A^{\mathrm{ext}}$ is unbounded, it is closed and definable (being the closure of
a definable open set). In this section we use the directional Reeb quotient/graph (and hence the
Reeb Transform) for such closed definable sets whenever the quotient is a definable graph. This holds for $A^{\mathrm{ext}}$ in our setting because $A^{\mathrm{ext}}$ is closed and definable and,
for each $v\in S^{2}$ and $t\in\R$, the slice $A^{\mathrm{ext}}\cap\{x\cdot v=t\}$ is a definable subset
of the plane and hence has finitely many connected components. In particular, the Reeb quotient
$\mathcal R_v(A^{\mathrm{ext}})$ is a definable one-dimensional graph-like space (possibly with unbounded
edges); see~\cite{basu2022reeb} and references therein.}
Then $A$ (as a subset of $\R^3$) is uniquely determined by the pair
$\bigl(\mathcal{RT}(A^{\mathrm{int}}),\,\mathcal{RT}(A^{\mathrm{ext}})\bigr)$(see Figure~\ref{fig:interiorRegions}).
\end{thm}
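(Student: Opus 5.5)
The plan is to reduce the statement to the injectivity of the Euler Characteristic Transform, in the form of Corollary~\ref{cor:inj_dense}. Concretely, we aim to show that the pair $\bigl(\mathcal{RT}(A^{\mathrm{int}}),\mathcal{RT}(A^{\mathrm{ext}})\bigr)$ determines $\ECT(A)(v)$ for a dense set of directions $v$. The central identity is a Mayer--Vietoris (inclusion--exclusion) relation on each slicing plane $P=\{x\cdot v=t\}$. Since $A^{\mathrm{int}}\cup A^{\mathrm{ext}}=\R^3$ and $A^{\mathrm{int}}\cap A^{\mathrm{ext}}=A$ (the common boundary, by Lemma~\ref{lem:surf_separates}), the definable (compactly supported) Euler characteristic gives
\[
\chi(LS(A,v,t)) \;=\; \chi(LS(A^{\mathrm{int}},v,t))+\chi(LS(A^{\mathrm{ext}},v,t))-\chi(P).
\]
So it suffices to read the Euler characteristics of the two planar slices off the two Reeb graphs.

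For the planar slices we use the standard description of a tame planar region. The Euler characteristic of a compact definable $K\subset P$ equals the number of components of $K$ minus the number of bounded components of $P\setminus K$. For $K=LS(A^{\mathrm{int}},v,t)$, the complement $P\setminus K$ is $\omega_{\mathrm{ext}}\cap P$. Its components are matched with the components of $LS(A^{\mathrm{ext}},v,t)$, and exactly one of them is unbounded. Similarly, $\chi(LS(A^{\mathrm{ext}},v,t))$ is computed from the number of its components together with the holes, which are components of $\omega_{\mathrm{int}}\cap P$. Hence in good situations everything is expressed through the fibre cardinalities $|\tilde h_v^{-1}(t)|$ of the two Reeb graphs, which these graphs record, exactly as in the proof of Lemma~\ref{lem:S(2)_inj}.

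The main obstacle is the matching between components of $\omega_{\mathrm{ext}}\cap P$ and components of its closure $LS(A^{\mathrm{ext}},v,t)$. At heights where $P$ is tangent to $A$, several open components can merge in the closure, or pieces of $A$ can lie inside the plane. To handle this, I would first restrict to a dense set of directions $v$ for which $h_v|_A$ has only finitely many critical heights, using a definable generic-direction argument analogous to Lemma~\ref{lem:collapsing}. For non-critical $t$ the matching is a bijection, and the formula gives $\chi(LS(A,v,t))$, which is $0$ since such a slice of $A$ is a union of circles. At the finitely many critical heights, I would try the local analysis of Proposition~\ref{prop:degree} instead. Using Lemma~\ref{lem:retract} on thin slabs, the slice Euler characteristic is expressed through the local degrees of the vertices at height $t$, applied to both $\mathcal R_v(A^{\mathrm{int}})$ and $\mathcal R_v(A^{\mathrm{ext}})$. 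This yields the needed local Euler data even where the closure merges components.

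With $\chi(LS(A,v,t))$ known for all $t$ and a dense set of $v$, definable Euler integration over $t$ recovers the sublevel Euler curves, that is $\ECT(A)(v)$, on that dense set. Corollary~\ref{cor:inj_dense} then identifies $A$ uniquely. If the critical-height analysis proves awkward, a fallback is to use the right-continuity argument of Proposition~\ref{prop:dense} in $t$. In that case the Euler curves only need to be determined off the finitely many critical heights.
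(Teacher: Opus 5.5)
Your reduction to Euler-characteristic injectivity and the planar inclusion--exclusion identity are fine. The argument breaks at the step that carries all the information: the critical heights.

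\textbf{The fallback is circular.} For the surface $A$, every non-critical slice is a union of circles, as you note. So $\chi(LS(A,v,t))=0$ off the finitely many critical heights $c_1<\dots<c_m$, and Euler integration gives $\ECT(A)(v)(t)=\sum_{c_i\le t}\chi(LS(A,v,c_i))$. Knowing the Euler curve off the critical heights is therefore the same as knowing $\chi$ of the critical slices. Right-continuity adds nothing.

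\textbf{The primary plan fails as stated.} The local-degree formula of Proposition~\ref{prop:degree} relies on thin-slab pieces of $A$ being genus-$0$ surfaces with $\deg(p)$ boundary circles. That is false for the solids $A^{\mathrm{int}}$ and $A^{\mathrm{ext}}$. In fact $\mathcal R_v(A^{\mathrm{int}})$ does not determine the Euler characteristics of its slices. For $v=e_3$, the solid ball and a flat solid torus both have Reeb graph an interval with a degree-$1$ top vertex. Yet their top slices are a point ($\chi=1$) and a circle ($\chi=0$). So the matching problem you identify, between components of $\omega_{\mathrm{ext}}\cap P$ and of $LS(A^{\mathrm{ext}},v,t)$, is left unresolved exactly where it matters.

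\textbf{The missing idea.} Stop computing Euler characteristics of the solids' slices and count components instead. Use Mayer--Vietoris in degree $0$ on the plane, with $P=X_{\mathrm{int}}\cup X_{\mathrm{ext}}$ and $K=X_{\mathrm{int}}\cap X_{\mathrm{ext}}$, thickened to open neighbourhoods that deformation retract onto these closed sets. Since $H_1(P)=0$, you get the short exact sequence $0\to H_0(K)\to H_0(X_{\mathrm{int}})\oplus H_0(X_{\mathrm{ext}})\to H_0(P)\to 0$. That is, $C(A,v,s)=C(A^{\mathrm{int}},v,s)+C(A^{\mathrm{ext}},v,s)-1$ for every $v$ and every $s$, critical or not. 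This identity concerns $X_{\mathrm{ext}}$ itself rather than $\omega_{\mathrm{ext}}\cap P$, so the closure-merging issue never arises.

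The right-hand side is read off as fibre cardinalities of the two Reeb graphs, so you know the component count of every slice of $A$. Then formula~\eqref{eq:chi} of Proposition~\ref{prop:chilevel}, applied to the surface $A$ itself, gives $\chi(LS(A,v,t))=2C(A,v,t)-C(A,v,t-\varepsilon)-C(A,v,t+\varepsilon)$ at every height. Theorem~\ref{thm:ect-inj}(1) then finishes, with no dense-direction argument needed. This is the paper's proof.
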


\begin{proof}
Let $B\in \Surf(3)$ be another connected surface such that
\[
\mathcal{RT}(A^{\mathrm{int}})\cong\mathcal{RT}(B^{\mathrm{int}})
\qquad\text{and}\qquad
\mathcal{RT}(A^{\mathrm{ext}})\cong\mathcal{RT}(B^{\mathrm{ext}}).
\]
We will prove $A=B$.

For any set $X\subset \R^3$, direction $w\in S^{2}$, and height $s\in\R$, write
\[
C(X,w,s)\ :=\ \text{the number of connected components of }LS(X,w,s).
\]

\smallskip
\noindent\textbf{Step 1: A component identity valid for all heights.}
Fix $w\in S^{2}$ and $s\in\R$, and let
\[
P:=\{x\in\R^{3}:x\cdot w=s\}\cong \R^{2}.
\]
Set
\[
X_{\mathrm{int}}:=LS(A^{\mathrm{int}},w,s)=A^{\mathrm{int}}\cap P,\qquad
X_{\mathrm{ext}}:=LS(A^{\mathrm{ext}},w,s)=A^{\mathrm{ext}}\cap P,\qquad
K:=LS(A,w,s)=A\cap P.
\]
Since $A^{\mathrm{int}}\cup A^{\mathrm{ext}}=\R^{3}$ and $A^{\mathrm{int}}\cap A^{\mathrm{ext}}=A$,
intersecting with $P$ yields
\[
P = X_{\mathrm{int}}\cup X_{\mathrm{ext}}
\qquad\text{and}\qquad
K = X_{\mathrm{int}}\cap X_{\mathrm{ext}}.
\]

% If $K=\emptyset$, then $P\subset \R^{3}\setminus A$, so
% $P\cap \omega_{\mathrm{int}}$ and $P\cap \omega_{\mathrm{ext}}$ are disjoint open sets covering $P$.
% Since $P$ is connected, exactly one of them is empty. It follows that either
% $X_{\mathrm{int}}=\emptyset$, and hence
% \[
% C(A,w,s)=0 = C(A^{\mathrm{int}},w,s)+C(A^{\mathrm{ext}},w,s)-1.
% \]
% Thus~\eqref{eq:component_identity} holds in this case. Hence assume $K\neq\emptyset$.

If $K=\emptyset$, then $P\subset \R^{3}\setminus A$. Since $P$ is connected, it must lie entirely in one of the two connected components $\omega_{\mathrm{int}}$ or $\omega_{\mathrm{ext}}$ of $\R^{3}\setminus A$.
However, $\omega_{\mathrm{int}}$ is bounded, whereas $P\cong\R^{2}$ is unbounded, so $P$ cannot be contained in $\omega_{\mathrm{int}}$. 
Hence
\[
P\subset \omega_{\mathrm{ext}}.
\]
Since $P\cap A=\emptyset$, it follows that
\[
X_{\mathrm{int}}=\emptyset
\qquad\text{and}\qquad
X_{\mathrm{ext}}=P.
\]
Therefore,
\[
C(A,w,s)=0,
\qquad
C(A^{\mathrm{int}},w,s)=0,
\qquad
C(A^{\mathrm{ext}},w,s)=1,
\]
and hence
\[
C(A,w,s)
=
C(A^{\mathrm{int}},w,s)
+
C(A^{\mathrm{ext}},w,s)
-1.
\]
Thus~\eqref{eq:component_identity} holds when $K=\emptyset$.
Henceforth assume $K\neq\emptyset$.

Because $X_{\mathrm{int}},X_{\mathrm{ext}},K$ are closed definable subsets of the plane $P$,
they are locally compact and locally contractible by \cite[Theorem ~4.4]{coste1999introduction}. There is a definable homeomorphism $\Phi$ from $\R^3$ to a bounded subset. The sets $\Phi(X_{\mathrm{int}})$, $\Phi(X_{\mathrm{ext}})$ and $\Phi(K)$ are each homeomorphic to finite simplicial complexes and hence each admits an open $\varepsilon$-neighborhood ($N_\varepsilon(\Phi(X_{\mathrm{ext}}))$, $N_\varepsilon(\Phi(X_{\mathrm{ext}}))$ and $N_\varepsilon(\Phi(K))$ respectively) that strong deformation retracts onto it \cite[Proposition ~A.5]{Hatcher2002}.

Setting $U_{\mathrm{int}} = \Phi^{-1}(N_\varepsilon(\Phi(X_{\mathrm{int}}))$ and $U_{\mathrm{ext}} = \Phi^{-1}(N_\varepsilon(\Phi(X_{\mathrm{ext}})))$ we then, by conjugating the previous deformation retracts with $\Phi$, we also have each of
$X_{\mathrm{int}} \hookrightarrow U_{\mathrm{int}}$ and
$X_{\mathrm{ext}} \hookrightarrow U_{\mathrm{ext}}$ is a strong deformation
retract. Moreover $U_{\mathrm{int}} \cup U_{\mathrm{ext}} \supseteq
X_{\mathrm{int}} \cup X_{\mathrm{ext}} = \R^2 $.
By construction
\begin{align*}
U_{\mathrm{int}} \cap U_{\mathrm{ext}} &=\Phi^{-1}(N_\varepsilon(\Phi(X_{\mathrm{int}})) \cap \Phi^{-1}(N_\varepsilon(\Phi(X_{\mathrm{ext}})))\\
&=\Phi^{-1}(N_\varepsilon(\Phi(X_{\mathrm{int}})\cap (N_\varepsilon(\Phi(X_{\mathrm{ext}}))\\
&=\Phi^{-1}(N_\varepsilon(X_{\mathrm{int}})\cap X_{\mathrm{ext}})\\
&=\Phi^{-1}(N_\varepsilon(K))
\end{align*}
and hence $U_{\mathrm{int}} \cap U_{\mathrm{ext}}$ strong deformation retracts onto $K$.

Apply the Mayer--Vietoris exact sequence in singular homology to the open cover
$P=U_{\mathrm{int}}\cup U_{\mathrm{ext}}$ (see, e.g., \cite[Sec.~2.2]{Hatcher2002}).
Since $P\cong \R^{2}$ is contractible, $H_{1}(P)=0$ and $H_{0}(P)\cong \Z$, so exactness yields a
short exact sequence
\[
0\longrightarrow H_{0}(U_{\mathrm{int}}\cap U_{\mathrm{ext}})
\longrightarrow H_{0}(U_{\mathrm{int}})\oplus H_{0}(U_{\mathrm{ext}})
\longrightarrow H_{0}(P)\longrightarrow 0.
\]
Taking ranks and using $\mathrm{rank}\,H_{0}(Y)=C(Y)$ for any space $Y$ with finitely many connected
components (including $Y=\emptyset$), we obtain
\[
C(U_{\mathrm{int}}\cap U_{\mathrm{ext}})=C(U_{\mathrm{int}})+C(U_{\mathrm{ext}})-1.
\]
By the homotopy equivalences above, we have
\[
C(U_{\mathrm{int}}\cap U_{\mathrm{ext}})=C(K),\qquad
C(U_{\mathrm{int}})=C(X_{\mathrm{int}}),\qquad
C(U_{\mathrm{ext}})=C(X_{\mathrm{ext}}),
\]
and hence
\[
C(K)=C(X_{\mathrm{int}})+C(X_{\mathrm{ext}})-1,
\]
i.e.,
\begin{equation}\label{eq:component_identity}
C(A,w,s)\ =\ C(A^{\mathrm{int}},w,s)\ +\ C(A^{\mathrm{ext}},w,s)\ -\ 1
\end{equation}
for all $w\in S^{2}$ and all $s\in\R$.

\smallskip
\noindent\textbf{Step 2: Recovering $\chi(LS(A,w,t))$ from the pair of transforms.}
Now fix a direction $w$. 
%such that the height functions $h_w|_{A}$ and $h_w|_{B}$ are Morse with
%distinct critical values. Fix any $t\in\R$, and choose $\varepsilon>0$ so that $t-\varepsilon$%
%and $t+\varepsilon$ are regular values of both $h_w|_{A}$ and $h_w|_{B}$ and such that there are no
%other critical values of either function in $(t-\varepsilon,t+\varepsilon)$.
Then Proposition~\ref{prop:chilevel} gives
\[
\chi\!\bigl(LS(A,w,t)\bigr)\;=\;
2\,C(A,w,t)\;-\;C(A,w,t-\varepsilon)\;-\;C(A,w,t+\varepsilon),
\]
and likewise for $B$.

For each $s\in\{t,\,t-\varepsilon,\,t+\varepsilon\}$, the values
$C(A^{\mathrm{int}},w,s)$ and $C(A^{\mathrm{ext}},w,s)$ are determined by the Reeb graphs appearing
in $\mathcal{RT}(A^{\mathrm{int}})$ and $\mathcal{RT}(A^{\mathrm{ext}})$ (as the cardinalities of the
fibers at height $s$), and similarly for $B$. Hence~\eqref{eq:component_identity} determines
$C(A,w,s)$ and $C(B,w,s)$ for these heights. Since the assumed isomorphisms of Reeb transforms imply
\[
C(A^{\mathrm{int}},w,s)=C(B^{\mathrm{int}},w,s)
\quad\text{and}\quad
C(A^{\mathrm{ext}},w,s)=C(B^{\mathrm{ext}},w,s)
\qquad\text{for all such }s,
\]
we obtain $C(A,w,s)=C(B,w,s)$ for $s\in\{t,\,t-\varepsilon,\,t+\varepsilon\}$, and consequently
\begin{equation}\label{eq:slice-eq-on-dense}
\chi\!\bigl(LS(A,w,t)\bigr)=\chi\!\bigl(LS(B,w,t)\bigr)
\qquad\text{for all }t\in\R
\end{equation}
for every direction $w$.% for which both $h_w|_A$ and $h_w|_B$ are Morse with distinct critical values.

Finally, Theorem~\ref{thm:ect-inj} yields $A=B$.
\end{proof}

\section{Conclusion and future work}

In this work, we have initiated the study of the Reeb transform.  While not a complete invariant of an input shape, the utility of Reeb graphs in practice motivates their study, and our results show that they are discriminative and potentially useful for a wide class of low dimensional settings.

Our work suggests several interesting possibilities for future study.  First, there are well established  connections between the more widely used mapper graphs and Reeb graphs, in particular that mapper graphs in some sense converge in the limit to Reeb graphs~\cite{Munch2016,Brown2020}.  That said, mapper graphs are considerably more unstable, with many parameter selection choices affecting the resulting graph~\cite{Carriere2018}.  We conjecture that mapper transforms would share similar properties as we studied here, but different tools would be necessary given that mapper is typically build over point cloud data.  Given the utility of mapper graphs in general but also the necessarily very different tools that would be used, we leave this interesting question for future work.

From a more applied perspective, the PHT and ECT have both proven useful in applications, so it would be interesting to study experimental evaluations of Reeb graphs for shape analysis pipelines.  Previous work has studied the complexity of maintaining dynamic Reeb graphs, where the input shape changes in some controlled way and the algorithm updates the graph at each step without recomputing~\cite{Agarwal2018}.  It seems likely that similar approaches could be adapted to the setting where the height function varies continuously, yielding fast algorithms and data structures.  Depending on the data set, this might also connect to the mapper graph algorithms above, as mapper graphs are more commonly used for point cloud data, where no underlying simplicial complex is known.

Beyond Reeb graphs, recent work has also studied Reeb spaces, where multiple functions on the input space project it into a lower dimensional structure; however, to the best of our knowledge there is no known work on Reeb space transforms.  Aside from  using height functions, one could also consider intrinsic functions like curvature or affine Grassmannian transforms (i.e. distance to a point, line, or hyperplane)~\cite{onus2024shovingtubesshapesgives}, and study the discriminativity of the resulting Reeb space or annotated Reeb graph.

\section*{Conflict of Interest Statement}
On behalf of all authors, the corresponding author states that there is no conflict of interest.
%\backmatter

%\bmhead{Acknowledgements}
%Add grants
\bibliographystyle{plain}
\bibliography{ReebTransform}

\end{document}